\newcommand{\fif}{\rmiff}
\newcommand{\rmiff}{\mathbin{~\mbox{iff}~}}
\newcommand{\class}{\textsc}
\newcommand{\defn}{\mathrel{\mbox{$~\stackrel{\rm def}{=}~$}}}
\newcommand{\until}{\textsf{U}}
\newcommand{\since}{\textsf{S}}
\newcommand{\fut}{\textsf{F}}
\newcommand{\past}{\textsf{P}}
\newcommand{\POTDFA}{\mbox{$\mathit{po2dfa}$}}
\newcommand{\oomit}[1]{}
\newcommand{\df}{\defn}
\newcommand{\rend}{\triangleleft}
\newcommand{\lend}{\triangleright}
\newcommand{\smartqed}
\newcommand{\mtlus}{\mbox{$\mathit{MTL[\until_I,\since_I]}$}}
\newcommand{\mtlfp}{\mbox{$\mathit{MTL[\fut_I,\past_I]}$}}
\newcommand{\mitlus}{\mbox{$\mathit{MITL[\until_I,\since_I]}$}}
\newcommand{\bmtlus}{\mbox{$\mathit{Bounded ~MTL[\until_b,\since_b]}$}}
\newcommand{\mitlfp}{\mbox{$\mathit{MITL[\fut_I,\past_I]}$}}
\newcommand{\bmitlfp}{\mbox{$\mathit{Bounded ~MITL[\fut_b,\past_b]}$}}
\newcommand{\tptlus}{\mbox{$\mathit{TPTL[\until,\since]}$}}
\newcommand{\potdta}{\mbox{$\mathit{po2DTA}$}}
\newcommand{\unite}{\otimes}
\newcommand{\autm}{\mathcal A}
\newcommand{\potdfa}{\POTDFA}
\newcommand{\twodta}{\mbox{$\mathit{2DTA}$}}
\newcommand{\utl}{\mbox{$\mathit{TL[F,P]}$}}
\newcommand{\ttl}{\mbox{$\mathit{TTL[X_\theta, Y_\theta]}$}}
\newcommand{\ssp}{\mbox{$\mathit{Spoiler}$}}
\newcommand{\ddp}{\mbox{$\mathit{Duplicator}$}}
\newcommand{\aaa}{\mathcal A}
\newcommand{\bbb}{\mathcal B}
\newcommand{\maxint}{\mbox{$\mathit{MaxInt}$}}
\newcommand{\ttlxy}{\mbox{$\mathit{TTL}[X_\theta,Y_\theta]$}}
\newcommand{\mtlu}{\mbox{$\mathit{MTL[\until_I]}$}}
\newcommand{\mitlf}{\mbox{$\mathit{MITL[\fut_I]}$}}
\newcommand{\mitlfpinf}{\mbox{$\mathit{MITL[\fut_\infty,\past_\infty]}$}}
\newcommand{\mitlusinf}{\mbox{$\mathit{MITL[\until_\infty,\since_\infty]}$}}
\newcommand{\mitlfpb}{\bmitlfp}
\newcommand{\mtlusb}{\bmtlus}
\newcommand{\mitlfb}{\mbox{$\mathit{Bounded ~MITL[\fut_b]}$}}
\newcommand{\mitl}{\mbox{$\mathit{MITL}$}}
\newcommand{\mtl}{\mbox{$\mathit{MTL}$}}
\newcommand{\ltlus}{\mbox{$\mathit{LTL[\until,\since]}$}}
\newcommand{\ltlfp}{\mbox{$\mathit{LTL[\fut,\past]}$}}
\newcommand{\mitluszinf}{\mbox{$\mathit{MITL[\until_{0,\infty},\since_{0,\infty}]}$}}
\newcommand{\mitlfpzinf}{\mbox{$\mathit{MITL[\fut_{0,\infty},\past_{0,\infty}]}$}}
\newcommand{\mitlfzinf}{\mbox{$\mathit{MITL[\fut_{0,\infty}]}$}}
\newcommand{\mitlfinf}{\mbox{$\mathit{MITL[\fut_{\infty}]}$}}
\newcommand{\mitlfpz}{\mbox{$\mathit{MITL[\fut_{0},\past_{0}]}$}}
\newcommand{\mitlfz}{\mbox{$\mathit{MITL[\fut_{0}]}$}}
\newcommand{\pspace}{\mbox{$\class{PSPACE}$}}
\newcommand{\expspace}{\mbox{$\class{EXPSPACE}$}}
\newcommand{\nexptime}{\mbox{$\class{NEXPTIME}$}}
\newcommand{\conp}{\mbox{$\class{CoNP}$}}
\newcommand{\np}{\mbox{$\class{NP}$}}
\newcommand{\nexp}{\mbox{$\class{NEXPTIME}$}}
\newcommand{\tul}{\mbox{$\mathit{TUL}$}}
\newcommand{\idx}{\mathit{Idx}}
\newcommand{\frst}{{\mathcal F}}
\newcommand{\lst}{{\mathcal L}}
\newcommand{\tab}{\hspace*{1cm}}
\begin{document}
\title{The Unary Fragments of Metric Interval Temporal Logic: Bounded versus Lower bound Constraints\\ (Full Version)}
\author{Paritosh K.~Pandya and Simoni S.~Shah}
\institute{Tata Institute of Fundamental Research,
Colaba, Mumbai \textit{400005}, India}
\pagestyle{empty} 

\maketitle

\begin{abstract}
\footnote{This is the full version of the paper of the same name presented at ATVA, 2012 (doi: 10.1007/978-3-642-33386-6)}
We study two unary fragments of the well-known metric interval temporal logic $\mitlus$ that was
originally proposed by Alur and Henzinger, and we pin down their expressiveness
as well as satisfaction complexities. We show that $\mitlfpinf$ which has unary
modalities with only lower-bound constraints is (surprisingly) expressively complete for Partially Ordered 2-Way Deterministic Timed Automata ($\potdta$) and the reduction from logic to automaton gives us its NP-complete satisfiability. We also show that the fragment $\mitlfpb$ having unary modalities with only bounded intervals has \nexptime-complete satisfiability. But strangely, 
\mitlfpb\/ is strictly less expressive than \mitlfpinf. 
We provide a comprehensive picture of the decidability and expressiveness of various unary fragments of \mitl.
\end{abstract}

\section{Introduction}
Temporal logics are a well known notation for specifying properties of reactive systems.
Reductions between temporal logics and finite state automata have been very influential in formulating decision procedures and model checking of temporal logic properties. 
However, extending this paradigm to real-time logics and timed automata has been challenging.

Metric Temporal Logic $\mtlus$ is a well established logic for specifying quantitative properties of  timed behaviors in real-time. 
In this logic, the temporal modalities $\until_I$ and $\since_I$ are time constrained by a time interval $I$. 
A formula $\phi \until_I \psi$ holds at a position $i$ provided
there exists a strictly later position $j$ where $\psi$ holds and $\phi$ must hold for all in between positions. Moreover the ``time distance'' between $j$ and $i$ must be in the interval $I$. 
Interval $I = \langle l,u \rangle$ has integer valued endpoints  and it can be open, closed, half open, or singular (i.e. $[c,c]$). It can even be unbounded, i.e. of the form $\langle l,\infty)$.  Unary modalities $\fut_I \phi$ and $\past_I \phi$ can be defined as $(true) \until_I \phi$ and  $(true) \since_I \phi$, respectively. 
Unfortunately, satisfiability of $\mtlus$ formulae and their model checking (against timed automata) are both undecidable in general \cite{AH93,Hen91}.

In their seminal paper \cite{AFH96}, the authors proposed the sub logic $\mitlus$ having only non-punctual (or non-singular) intervals.  Alur and Henzinger \cite{AFH96,AH92} showed that the logic \mitlus\/ has \expspace-complete satisfiability\footnote{This assumes that the time constants occurring in the formula are written in binary. We follow the same convention throughout this paper.}. 
In another significant paper \cite{BMOW08}, Bouyer {\em et al} showed that sublogic of $\mtlus$ with only bounded intervals, denoted $\mtlusb$, also has \expspace-complete satisfiability. These results are practically significant since many real time properties can be stated with bounded or non-punctual interval constraints.

In quest for more efficiently decidable timed logics, Alur and Henzinger considered the fragment \mitluszinf\/ consisting only of ``one-sided'' intervals, and showed that it  has \pspace-complete satisfiability. Here, allowed intervals are of the form $[0,u\rangle$   or $\langle l,\infty)$ thereby enforcing either an upper bound or a lower bound time constraint in each modality.  

Several real-time properties of systems may be specified by using the unary \emph{future} and \emph{past} modalities alone. In the untimed case of finite words, the unary fragment of logic \ltlus\/ has a  special position: the  unary temporal logic $\ltlfp$ has NP-complete satisfiability 
\cite{EVW02} and it expresses exactly the unambiguous star-free languages which are characterized by Partially ordered 2-Way Deterministic Finite Automata (\potdfa) \cite{STV01}. 
On the other hand, the \pspace-complete satisfiability of $\ltlus$  drops to \np-complete satisfiability for unary temporal logic $\ltlfp$\/ \cite{EVW02}.
Automata based characterizations for the above two logics are also well known: \ltlus\/- definable languages are exactly the star-free regular languages  which are characterized by counter-free automata, where as \ltlfp\/- definable languages exactly correspond to the unambiguous star-free languages \cite{TT02} which are characterized by Partially ordered 2-Way Deterministic Automata (\potdfa) \cite{STV01}.

Inspired by the above, in this paper, we investigate several ``unary'' fragments of \mitlus\/ and we pin down their exact decision complexities as well as expressive powers.
\oomit{
Metric temporal logic can be defined over various classes of time frames, such as point-wise or continuous time. Moreover, the behaviors may be infinite or finite. The decidability and expressiveness of $\mtl$ changes very significantly with the nature of time \cite{OW07,DP07}. }
\emph{In this paper, we confine ourselves to point-wise $\mitl$ with finite strictly monotonic time}, i.e. the models are finite timed words where no two letters have the same time stamp. 

\oomit{
Metric temporal logic can be defined over various classes of behaviors. Point-wise metric temporal logic has timed words as its models.
Additionally, these can be constrained to be finite timed words or infinite timed words. Moreover, these words can either be weakly monotonic, permitting a finite sequence of events occurring at the same time point, or strictly monotonic.
An alternative to point-wise $\mtl$ is continuous timed $\mtl$ where models are timed state sequences (or equivalently finitely variable signals).
The expressiveness and decidability properties the logic depends very significantly on the underlying model class \cite{OW07,DP07}. 
For example, the future only fragment $\mtlu$ over finite point-wise time (i.e. finite timed words) was shown to be decidable with NPR complexity by Ouaknine and Worrell \cite{OW07} where as logics $\mtlu$ over infinite words, continuous time $\mtlu$ over finite (but unbounded) timed state sequences and $\mtlus$ over finite point-wise time all turn out to have undecidable satisfiability. 
The \expspace-complete satisfiability of $\mitlus$ over continuous time (i.e. timed state sequences) was shown by Alur and Henzinger \cite{AH92} as well as Raskin, Henzinger and Schobbens \cite{HRS98} using several different forms of automata. Rabinovich and Hirshfeld established the same result using a  non-automata theoretic technique \cite{HR04}. It is widely accepted that $\mitlus$ over point-wise time also has \expspace-complete satisfiability and $\mitluszinf$ has \pspace-complete satisfiability for both weakly and strictly monotonic timed words. (The hardness proofs carry over quite straightforwardly from the continuous timed case. See Appendix.)
}

As our main results, we identify two fragments of unary logic $\mitlfp$ for which a remarkable drop in complexity of checking satisfiability is observed, and we study their automata as well as expressive powers. These fragments are as follows.
\begin{itemize} 
\item  Logic $\mitlfpinf$ embodying only unary ``lower-bound'' interval constraints of the form $F_{\langle l,\infty)}$ and $P_{\langle l,\infty)}$. We show that satisfiability of this logic is \np-complete.
\item Logic $\mitlfpb$ having only unary modalities $F_{\langle l,u\rangle}$ and $P_{\langle l,u\rangle}$ with bounded and non-singular interval constraints  where ($u \not= \infty$).
We show that satisfiability of this logic is \nexptime-complete.
\end{itemize}
In both cases, an automata theoretic decision procedure is given as a language preserving reduction from the logic  to  Partially Ordered 2-Way Deterministic Timed Automata (\potdta). These automata are a subclass of the 
2Way Deterministic Timed Automata $\twodta$ of Alur and Henzinger \cite{AH92} and they incorporate the notion of partial-ordering of states. 
They define a subclass of timed regular languages called unambiguous timed regular languages (\tul) (see \cite{PS10}).
\potdta\/ have several attractive features: they are boolean closed (with linear blowup only) and  their non-emptiness checking is \np-complete.
\oomit{ The
\potdta\/ have several attractive features: they are boolean closed and each boolean operation gives rise to linear blowup in their size.
Moreover,  their non-emptiness is NP-complete and language containment is \conp-complete  \cite{PS10}.
}
The  properties of \potdta\/ together with our reductions give the requisite decision procedures for satisfiability checking of logics \mitlfpinf\/ and \mitlfpb. \oomit{Both the reductions rely upon a novel scheme of  clocking (freezing) the times of first and last occurrences of subformulas in a pre-determined finite set of time intervals  to evaluate the truth of the formula. This can be done in  an inductive, bottom up fashion in successive passes of the two way automaton. 
}

The reduction from $\mitlfpinf$ to \potdta\/ uses a nice optimization which becomes possible in this sublogic: truth of a formula at any point can be determined as a simple condition between times of first and last occurrences of its modal subformulas and current time. A much more sophisticated but related optimization is required for the logic $\mitlfpb$ with both upper and lower bound constraints: truth of a formula at any point in a unit interval can be related to the times of first and last occurrences  of its immediate modal subformulas in some ``related'' unit intervals. The result is an inductive bottom up evaluation of the first and last occurrences of subformulas which is carried out in successive passes of the two way deterministic timed automaton.

For both the logics, we show that  our decision procedures are optimal.
\oomit{ the hardness of these and other related logics is established by reduction of a suitable tiling problem to the satisfiability of the concerned unary \mitl\/ fragment.}  
We also verify that the logic $\mitlf$ consisting only of the unary future fragment of $\mitlus$  already exhibits \expspace-complete satisfiability. Moreover, the unary future fragment $\mitlfz$ with only upper bound constraints has \pspace-complete satisfiability, whereas  $\mitlfpinf$ with only lower bound constraints has \np-complete satisfiability.
A comprehensive picture of decision complexities of fragments of $\mitlfp$ is obtained and summarized in Figure \ref{fig:complexity}.

We also study the expressive powers of logics \mitlfpinf\/ and \mitlfpb.
We establish that \mitlfpinf\/  is expressively complete for \potdta, and hence it can define all unambiguous timed regular languages (\tul). 
This is quite  surprising as $\potdta$ include guards with simultaneous upper and lower bound constraints as well as punctual constraints, albeit only occurring deterministically. Expressing these in $\mitlfpinf$, which has only lower bound constraints, is tricky.
\oomit{This \ttlxy\/ logic embodies the freeze quantification (of \tptlus)  and it has both punctual and unbounded constraints, albeit occurring only within deterministic modalities. 
We now show that  $\mitlfpinf$ can express all $\ttlxy$ formulas, and the two logics are expressively equivalent. The reduction from $\ttlxy$ to \mitlfpinf\/ is a variant of another such encoding into the more expressive logic \mitlfp\/ given earlier (see \cite{PS11}).}
We remark that $\mitlfpinf \equiv \potdta$ is a rare instance of a precise logic automaton connection within the \mtlus\/ family of timed logics.

We also establish that  $\mitlfpinf$ is strictly more expressive than the bounded unary logic $\mitlfpb$. Combining these results with decision complexities, we conclude that $\mitlfpb$, although less expressive, is exponentially more succinct as compared to the logic $\mitlfpinf$. Completing the picture, we show that, for expressiveness, $\mitlfpb \subsetneq \mitlfpinf \subsetneq \mitlfpzinf \subsetneq \mitlfp$. For each logic, we give a sample property that cannot be expressed in the contained logic (see Figure \ref{fig:unaryexpress}). The inexpressibility of these properties in lower logics are proved using an EF theorem for $\mtl$ formulated earlier \cite{PS11}.

For logic $\mitlfpb$, the reduction relies on the property that checking truth of a unary modal formula $M_{\langle l,l+1 \rangle} \phi$ at any position $T$ of a given unit interval $[r,r+1)$ can be formulated as simple condition over $T$ and the times of
first and last occurrences of $\phi$ in some related unit intervals (such as
$[l+r,l+r+1)$. We call this the horizontal stacking of unit intervals
Some remarks on our reductions are appropriate here.
It should be noted that these logics have both future and past modalities and these naturally translate to the two-wayness of the automata. An important feature of our reduction is that checking of satisfiability of a modal subformula $F_I \phi$ reduces searching for ``last'' occurrence of  $\phi$ within some specified subintervals, and remembering its time stamp. This can be carried out by one backward scan of the automaton.  Similarly, for the past formula $P_i \phi$ we need a forward scan.

\begin{figure}
\begin{tikzpicture}[scale=0.9,transform shape]
\draw(0,0) -- (14,0); \draw(0,1.5) -- (14,1.5);\draw(0,3) -- (14,3);\draw(0,4.5) -- (14,4.5);\draw(0,6) -- (14,6);
\draw [dashed] (3,0) -- (3,6);
\draw (0.5,0.7) node{\np-complete}; \draw (1,2.2) node{\pspace-complete}; \draw(1,3.5)node{\nexp-complete}; \draw (1,5.2) node{\expspace-complete};

\draw (9,1) node (A) [rectangle,draw] {\mitlfpinf};
\draw (12,0.5) node (L) [rectangle] {\mitlfinf};
\draw (5,1) node (B) [rectangle] {\ttl};
\draw (12,1.9) node (C) [rectangle] {\mitlfz};
\draw (9,2.3) node (K) [rectangle] {\mitlfpzinf};
\draw (5,2.7) node (D) [rectangle] {\mitluszinf};
\draw (12.5,3.5) node (E) [rectangle] {\mitlfb};
\draw (8,3.8) node (F) [rectangle,draw] {\mitlfpb};
\draw (6.7,4.8) node (G) [rectangle] {\mtlusb};
\draw (12,4.8) node (H) [rectangle] {\mitlf};
\draw (9,5.3) node (I) [rectangle] {\mitlfp};
\draw (4.3,5.8) node (J) [rectangle] {\mitlus};
\draw (5,1.8) node (M) [rectangle] {\mitlusinf};

\draw [->](J)-- (I);
\draw [->] (I)-- (H);
\draw [->](H)-- (E);
\draw [->](I)-- (F);
\draw [->](G)-- (F);
\draw [->](F)-- (E);
\draw [->](K)-- (A);
\draw [->](D)-- (K);
\draw [->](K)-- (C);
\draw [->](A)-- (L);
\draw [->](D)-- (M);
\draw [->](M)-- (A);
\draw [->](J)--(D);
\end{tikzpicture}
\caption{Unary MITL: fragments with satisfiability complexities. Arrows indicate syntactic inclusion. The boxed logics are the two main fragments studied in this chapter.}
\label{fig:complexity}
\end{figure}
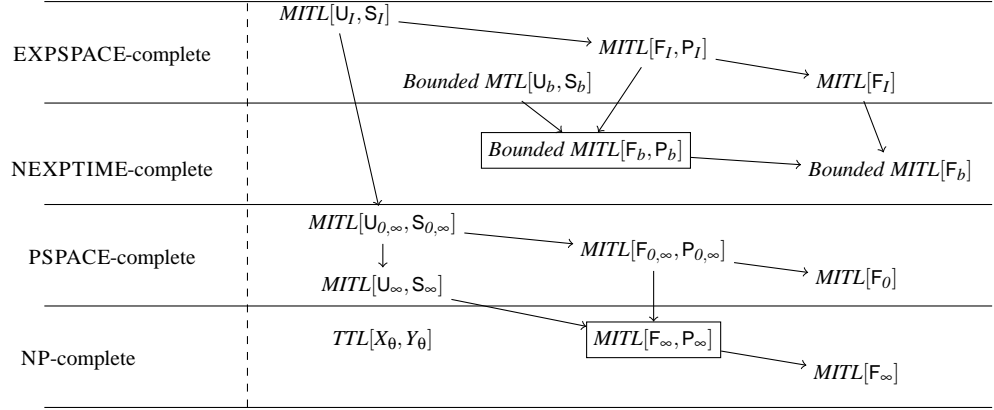

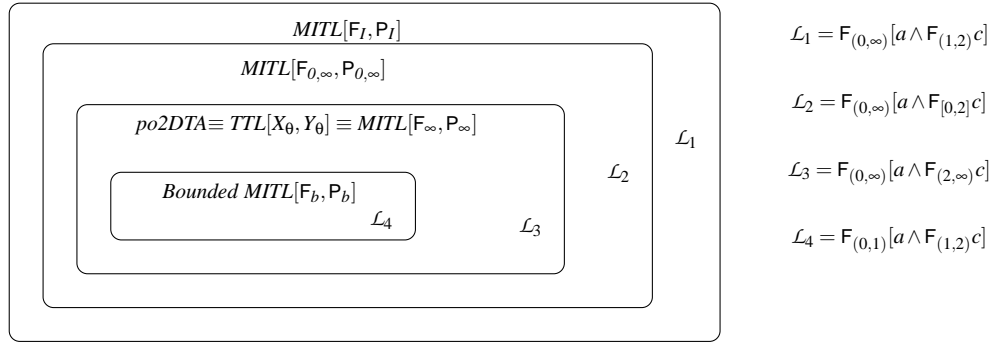
\begin{figure}
\begin{tikzpicture}[scale = 0.9, transform shape]
 \draw [rounded corners] (0,0) rectangle (10.5,5);
\draw (5,4.6) node {\mitlfp};
\draw (10,3) node {$\mathcal L_1$};

 \draw [rounded corners] (0.5,0.5) rectangle (9.5,4.4);
\draw (4.5,4) node {\mitlfpzinf};
\draw (9,2.5) node {$\mathcal L_2$};

 \draw [rounded corners] (1,1) rectangle (8.2,3.5);
\draw (4.4,3.2) node {\potdta $\equiv$ \ttl\/ $\equiv$ \mitlfpinf};
\draw (7.7,1.7) node {$\mathcal L_3$};

\draw [rounded corners] (1.5,1.5) rectangle (6,2.5);
\draw (3.7,2.2) node{\mitlfpb};
\draw (5.5,1.8) node {$\mathcal L_4$};

\draw (13,4.5) node {$\mathcal L_1 = \fut_{(0,\infty)}[a\land \fut_{(1,2)}c]$};
\draw (13,3.5) node {$\mathcal L_2 = \fut_{(0,\infty)}[a\land \fut_{[0,2]}c]$};
\draw (13,2.5) node {$\mathcal L_3 = \fut_{(0,\infty)}[a\land \fut_{(2,\infty)}c]$};
\draw (13,1.5) node {$\mathcal L_4 = \fut_{(0,1)}[a\land \fut_{(1,2)}c]$};

\end{tikzpicture}
\caption{Expressiveness of Unary \mitl\/ fragments}
\label{fig:unaryexpress}
\end{figure}

\section{Unary \mitl\/ and its fragments}
\label{sec:mitlsem}
\begin{definition}{[Timed Words]}
A finite timed word over an alphabet $\Sigma$ is a finite sequence $\rho = (\sigma_1,\tau_1), \cdots (\sigma_n,\tau_n)$, of event-time stamp pairs 
such that $\forall i ~.~ \sigma_i\in \Sigma$ and the 
sequence of time stamps is non-decreasing: $\forall i<n ~.~ \tau_i\leq \tau_{i+1}$. This gives weakly monotonic timed words. If
time stamps are strictly increasing, i.e. $\forall i<n ~.~ \tau_i< \tau_{i+1}$, the timed word is strictly monotonic.
\end{definition}
The length of $\rho$ is denoted by $\#\rho$, and $dom(\rho) =
\{1,...\#\rho\}$. For convenience, we assume  that $\tau_1 = 0$ as
this simplifies the treatment of the initial semantics of timed logics. The timed word $\rho$ can alternately be represented as 
$\rho=(\overline{\sigma},\overline{\tau})$ with $\overline{\sigma} = \sigma_1, \cdots, \sigma_n$ and 
$\overline{\tau} = \tau_1, \cdots,\tau_n$. Let
$untime(\rho)=\overline{\sigma}$ be the untimed word of $\rho$ and $alph(\rho)\subseteq\Sigma$ be the set of events that occur in $\rho$. Let $\rho(i)...\rho(j)$ for some $1\leq~ i~\leq~j ~\leq \#\rho$ be the factor of $\rho$ given by $(\sigma_i,\tau_i)\cdots(\sigma_j,\tau_j)$.
Let $T\Sigma^*$ be the set of timed words over the alphabet $\Sigma$.

The logic MTL \cite{Koy90,AH91} extends Linear Temporal Logic by adding timing
constraints to the "Until" and "Since" modalities of LTL, using timed intervals. We consider the unary fragment of this logic called \mtlfp. Let $I$ range over the set of intervals with non-negative integers as end-points. The syntax of \mtlfp\/ is as follows:
\[
\phi ::= a ~\mid~ \phi \lor\phi ~\mid~ \neg \phi ~\mid~ \fut_I\phi ~\mid~ \past_I\phi
\]

\begin{remark}
In this paper, we study \mtl\/ with interval constraints given by timed intervals with integer end-points. In literature, \mtl\/ with interval constraints with rational end-points are often considered. However, it is important to note that properties expressed by the latter may also be expressed by the former, by scaling the intervals as well as the timestamps in the timed word models appropriately.
\end{remark}

Let $\rho=(\overline{\sigma},\overline{\tau})$ be a timed word and let
$i\in dom(\rho)$. The semantics of \mtlfp\/ formulas over pointwise models is as below:
\[
\begin{array}{rcl}
\rho,i\models a & \fif & \sigma_i=a\\
\rho,i\models \neg \phi & \fif & \rho,i\not\models \phi\\
\rho,i\models \phi_1\lor \phi_2 & \fif & \rho,i\models\phi_1 ~\mbox{or} ~\rho,i\models\phi_2\\
\rho,i\models \fut_I \phi_1 & \fif & \exists j>i .~ \rho,j\models\phi_1\\
\rho,i\models \phi_1 \past_I \phi_2 & \fif & \exists j<i ~.~ \rho,j\models\phi_1\\
\end{array}
\]
The language of an \mtlfp\/ formula $\phi$ is given by $\mathcal L(\phi) = 
\{\rho ~\mid~ \rho, 1 \models \phi\}$.\\

\mitlfp\/ is the fragment of \mtlfp\/ which allows only non-punctual intervals to constrain the $\fut$ and $\past$ modalities. Some fragments of \mitlfp\/ that we shall consider in this paper are as follows. See Figure \ref{fig:unaryexpress} for examples.
\begin{itemize}
\item \mitlfpzinf allows only interval constraints of the form $[0,u\rangle$ or $\langle l,\infty)$. Thus, each modality enforces either an upper bound or a lower bound constraint.  
\item \mitlfpb\/ is \mitlfp\/ with the added restriction that all interval constraints are bounded intervals of the form $\langle l,u \rangle $ with $u \not=\infty$. 
\item \mitlfpinf\/ is the fragment of \mitlfp\/ where all interval constraints are ``lower bound'' constraints of the form $\langle l, \infty)$. 
\item \mitlfpz\/ is the fragment in which all interval constraints (whether bounded or unbounded) are ``upper bound'' constraints of the form $[0,u \rangle$.
\item \mitlf, \mitlfzinf, \mitlfb, \mitlfinf\/ and \mitlfz\/ are the corresponding \emph{future}-only fragments. 
\end{itemize}

\paragraph{Size of \mitlfp\/ formulas}
Consider any \mitlfp\/ formula $\phi$, represented as a DAG. Let $n$ be the number of modal operators in the DAG of $\phi$. Let $k$ be the product of all constants that occur in $\phi$. Then the \emph{modal-DAG size} $l$ of $\phi$, whose constants are presented in some logarithmic encoding (e.g. binary) is within constant factors of $(n + log k)$. 


\begin{definition}{[Normal Form for \mitlfp]}
\label{def:norm}
Let ${\mathscr B}(\{\psi_i\})$ denote a {\em boolean} combination of formulas from the finite set $\{ \psi_i \}$. Then a normal form formula $\phi$ is given by 
\[
 \begin{array}{l}
   \phi ~~:=~~ \bigvee\limits_{a \in \Sigma} ~ (a \land {\mathscr B}(\{\psi_i\})) \\
   \mbox{where each $\psi_i$ is a modal formula of the form}\\
   \psi ~~:=~~ F_{I}(\phi) ~\mid~ P_{I}(\phi)\\
   \mbox{where each $\phi$ is also in normal form.}
 \end{array}
\]

\end{definition}
A subformula $\phi$ in normal form is said to be an $\fut$-type \emph{modal argument (or modarg in brief)} if it occurs within an $\fut$-modality (as $\fut_I(\phi)$).  It is a $\past$-type modarg if it occurs as $\past_{I}(\phi)$. Each $\psi_i$ is said to be a \emph{modal sub formula}.

\begin{proposition}\label{prop:norm}
Every \mitlfp\/ formula $\zeta$ may be expressed as an equivalent normal form formula $\phi$ of modal-DAG size linear in the  modal-DAG size of $\zeta$.
\end{proposition}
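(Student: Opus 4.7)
The plan is to construct, by structural recursion, a function $N$ that sends each \mitlfp\/ formula $\zeta$ to an equivalent normal form formula, and then argue that $N$ preserves the modal-DAG size up to a constant factor. For an atom, set $N(a) := \bigvee_{b \in \Sigma} (b \land c_{a,b})$, where $c_{a,b}$ is the trivial boolean combination $\top$ if $b = a$ and $\bot$ otherwise (viewed as an empty conjunction / empty disjunction over the set of modal formulas). Assuming inductively that $N(\phi_j) = \bigvee_a (a \land B_j^a)$ is in normal form, set $N(\phi_1 \lor \phi_2) := \bigvee_a \bigl(a \land (B_1^a \lor B_2^a)\bigr)$, $N(\neg \phi_1) := \bigvee_a (a \land \neg B_1^a)$, and $N(\fut_I \phi) := \bigvee_{a \in \Sigma}\bigl(a \land \fut_I(N(\phi))\bigr)$, with the node $\fut_I(N(\phi))$ shared across all $|\Sigma|$ outer disjuncts in the DAG; the case $\past_I$ is symmetric. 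By inspection the output conforms to the grammar of Definition \ref{def:norm}.

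Correctness is a direct structural induction on $\zeta$. The key observation is that in any timed word $\rho$ exactly one letter holds at each position $i$, so $\rho, i \models \bigvee_a (a \land B^a)$ iff $\rho, i \models B^{\sigma_i}$. Using this, the boolean cases are immediate: at $i$, the formula $\bigvee_a (a \land \neg B_1^a)$ evaluates to $\neg B_1^{\sigma_i}$, which by IH matches $\neg \phi_1$; disjunction is analogous. For the modal case, the truth of $\fut_I \phi$ at $i$ does not depend on $\sigma_i$, so conjoining it with the tautological disjunction $\bigvee_a a$ is vacuous, and by IH applied to the argument one has $\fut_I(N(\phi)) \equiv \fut_I \phi$.

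For the size bound, recall that the modal-DAG metric only counts the number of distinct modal operators together with $\log k$; boolean connectives and letter literals cost nothing. Each clause of $N$ introduces a fresh modal node only in the modal case, and this node corresponds one-to-one to an $\fut_I$ or $\past_I$ occurrence of $\zeta$; DAG sharing ensures that the new node is referenced but not duplicated by the $|\Sigma|$ disjuncts. The $B_i^a$'s in the boolean clauses inherit their modal subformulas via shared pointers from the recursively constructed $N(\phi_i)$, so no duplication arises there either. Hence, by induction, the multiset of modal operators and the set of constants in $N(\zeta)$ coincide with those in $\zeta$, giving the desired linear bound. The main point to be careful about in a detailed write-up is exactly this bookkeeping of DAG sharing through the outer disjunction over $\Sigma$ at every level of the recursion; without sharing, the naive tree representation would blow up by a factor of $|\Sigma|$ per modal nesting level.
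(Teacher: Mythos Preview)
Your proof is correct. It differs from the paper's in presentation rather than substance: the paper conjoins $\zeta$ with the tautology $\bigvee_{a\in\Sigma} a$, pushes the result to DNF treating modal subformulas as atomic propositions, simplifies away disjuncts containing two distinct letters, and then recurses into the modal arguments; you instead give a direct structural recursion with one explicit clause per connective. Both arguments rest on the same observation---exactly one letter holds at each position, so the outer $\bigvee_a(a\land\cdots)$ is a case split---and the same size bookkeeping (boolean rewriting introduces no new modal nodes, and DAG sharing absorbs the $|\Sigma|$-fold outer disjunction at each level). Your compositional version avoids the intermediate DNF and makes the sharing discipline explicit, which is arguably cleaner for a detailed write-up; the paper's version is terser as a sketch but leaves more to the reader.
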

\begin{proof}
Given $\zeta$, consider the equivalent formula $\zeta \land \bigvee\limits_{a \in \Sigma} a$. Transform this formula in disjunctive normal form treating modal subformulas as atomic. Now apply reductions such as 
$a\land b\land \mathscr B(\psi_i) ~\equiv~ \bot$ (if $a\neq b$) and $a\land \mathscr B(\psi_i)$ otherwise. The resulting formula is equivalent to $\zeta$.
Note that  DNF representation does not increase the modal-DAG size of the formula. Apply the same reduction to modargs recursively.  
\end{proof}

\subsection{\potdta}
In \cite{PS10}, we defined a special class of \twodta\/ called Partially-ordered 2-way Deterministic Timed Automata (\potdta). The only loops allowed in the transition graph of these automata are self-loops. This condition naturally defines a partial order on the set of states (hence the name). Another restriction is that clock resets may occur only on progress edges. THese are a useful class of automata for the following reasons: 
\begin{itemize}
\item The ``two-way'' nature of the automata naturally allows the simultaneous treatment of \textit{future} and \textit{past} modalities in timed temporal logics.
\item Since they are deterministic, complementation may be achieved trivially. In fact, the deterministic and two-way nature of the automata allow for boolean operations to be achieved with only a linear blow-up in the size of the automaton.
\item The size of the small model of a \potdta\/ is polynomial in the size of the automaton. Hence, language emptiness of a \potdta\/ is decidable with \np-complete complexity.
\end{itemize} 

\potdta\/ are formally defined below.

Let $C$ be a finite set of clocks. A \emph{guard} $g$ is a timing constraint on the clock values and has the form:\\
\tab  $  g ~:=~ \top ~\mid~ g_1 \land g_2 ~\mid~ x-T\approx c ~\mid~ T-x\approx c ~\mbox{where $\approx\in \{<,\leq, >,\geq,=\}$
  and $c\in\mathbb{N}$}$.\footnote{Note that the guards $x-T\approx c$ and $T-x\approx c$ implicitly include the conditions $x-T\geq 0$ and $T-x\geq 0$ respectively.}\\

Here, $T$ denotes the current time value. Let $G_C$ be the set of all guards over $C$. A clock valuation is a function which assigns to each clock a non-negative real number. Let $\nu,\tau\models g$ denote that a valuation $\nu$ satisfies the guard $g$ when $T$ is assigned a real value $\tau$. If $\nu$ is a clock valuation  and $x\in C$, let $\nu'=\nu\unite (x\to \tau)$ denote a valuation such that $\forall y\in C ~.~ y\neq x\Rightarrow \nu'(y)=\nu(y)$ and $\nu'(x)=\tau$. Two guards $g_1$ and $g_2$ are said to be disjoint if for all valuations $\nu$ and all reals $r$, 
we have $\nu,r\models \neg(g_1 \land g_2)$. A special valuation $\nu_{init}$ maps all clocks to 0.

Two-way automata ``detect'' the ends of a word, by appending the word with special \emph{end-markers} on either side. Hence, if $\rho = (\sigma_1,\tau_1) ...(\sigma_n,\tau_n)$ then the run of a \potdta\/ is defined on a timed word $\rho' = (\lend,0) (\sigma_1,\tau_1)... (\sigma_n,\tau_n),(\rend,\tau_n)$.

\begin{definition}[Syntax of \potdta]
Fix an alphabet $\Sigma$ and let $\Sigma' = \Sigma \cup \{\lend,\rend\}$. Let $C$ be a finite set of clocks. A po2DTA over alphabet $\Sigma$ is a tuple $M = (Q,\leq,\delta,s,t,r,C)$ where $(Q,\leq)$ is a partially
ordered and finite set of states such that $r,t$ are the only minimal elements and $s$ is the only maximal  element. Here, $s$ is the initial state, $t$ the accept state and $r$ the reject state. The set $Q \setminus \{t,r\}$ is partitioned into $Q_L$ and $Q_R$ (states which are \emph{entered from} the left and right respectively).  
The progress transition function is a partial function $\delta: ((Q_L \cup Q_R)\times \Sigma' \times G_C) \to Q \times 2^C)$
which specifies the progress transitions of the automaton, such that if $\delta(q,a,g) = (q',R)$ then $q' ~<~ q$ and $R \in 2^C$ is the subset of clocks that is reset to the current time stamp.
Every state $q$ in $Q \setminus
\{t,r\}$ has a default \textit{``else''} self-loop transition which is taken in all
such configurations for which no progress transition is enabled. 
Hence, the automaton continues to loop in a given state $q$ and scan the timed word in a single direction (right or left, depending on whether $q\in Q_L$ or $Q_R$ respectively), until one of the progress transitions is taken. Note that there are no transitions from the terminal states $r$ and $t$. 

\end{definition}

\begin{definition}[Run]
Let $\rho = (\sigma_1,\tau_1),(\sigma_2,\tau_2)... (\sigma_m,\tau_m)$ be a given timed word. 
The configuration of a po2DTA at any instant is given by $(q,\nu,l)$ where $q$ is the current state, the current value of the clocks is given by the valuation function $\nu$ and the current head 
position is $l \in dom(\rho')$. In this configuration, the head reads the letter $\sigma_l$ and the time stamp  $\tau_l$.

The run of a po2DTA on the timed word $\rho$ with and starting head position $k\in dom(\rho')$ and starting 
valuation $\nu$ is the (unique) sequence of configurations $(q_1,\nu_1,l_1)$ $\cdots$ $(q_n,\nu_n,l_n)$ such  that
\begin{itemize}
 \item Initialization: $q_1 = s$, $l_1=k$ and $\nu_1 = \nu$. The automaton always starts in the initial
 state $s$.
\item  If the automaton is in a configuration $(q_i,\nu_i,l_i)$ such that $\sigma_{l_i}=a$. If there exists a (unique) transition $\delta(q_i,a,g)=(p,X)$ such that $\nu_i,\tau_{l_i} \models g$. Then,
\begin{itemize}
 \item $q_{i+1}=p$
\item $\nu_{i+1}(x)=\tau_{l_i}$ for all clocks $x\in X$, and $\nu_{i+1}(x)= \nu_i(x)$ otherwise.
\item $l_{i+1} = l_i+1$ if $p\in Q_L$, $l_{i+1} = l_i-1$ if $p\in Q_R$ and $l_{i+1} = l_i$  if $p\in \{t,r\}$
\end{itemize}
\item If the automaton is in a configuration $(q_i,\nu_i,l_i)$ (and $q_i\not\in \{t,r\}$) and there does not exist a transition $\delta(q_i,a,g)$ such that $\sigma_{l_i}=a$ and $\nu_i,\tau_{l_i} \models g$. Then,
\begin{itemize}
 \item $q_{i+1}=q_i$
\item $\nu_{i+1}(x)=\nu_{i}(x)$ for all clocks $x\in C$ and
\item $l_{i+1} = l_i+1$ if $q_i\in Q_L$ and $l_{i+1} = l_i-1$ if $q_i\in Q_R$
\end{itemize}
\item Termination: $q_n\in \{t,r\}$. The run is accepting if $q_n=t$ and rejecting if $q_n=r$.
\end{itemize}
Let $\mathcal F_\autm$ be a function such that $\mathcal F_\autm(\rho,\nu,i)$ gives the final configuration
$(q_n,\nu_n,l_n)$ of the unique run of $\autm$ on $\rho$ starting with the configuration $(s,\nu,i)$.
The language accepted by an automaton $\autm$ is given by 
$\mathcal L(\autm)=\{\rho ~\mid~ \mathcal F_\autm(\rho,\nu_{init},1) = (t,\nu',i), \mbox{for some } i,\nu'\}$. 
\end{definition}

The transition function satisfies the following conditions.
\begin{itemize}
\item For all $q\in Q\setminus\{t,r\}$ and $g\in G_C$, there exists $\delta(q,\rend,g)=(q',X)$ such that $q' \in Q_R\cup\{t,r\}$ and $\delta(q,\lend,g)=(q',X)$ such that $q' \in Q_L\cup\{t,r\}$ . This prevents the head from falling off the end-markers. 
\item (Determinism) For all $q \in Q$ and $a\in\Sigma'$, if there exist distinct transitions 
$\delta(q,a,g_1)=(q_1,X_1)$ and $\delta(q,a,g_2)=(q_2,X_2)$, then $g_1$ and $g_2$ are disjoint.
\end{itemize}

\begin{example}
\label{exam:potdta}
Figure \ref{exam:autex1} shows an example po2DTA. 
This automaton accepts timed words with the following property: 
There is  $b$ in the interval $[1,2]$ and a $c$ occurs before it such that, if $j$ is the position of 
the first $b$ in the 
interval $[1,2]$ then there is a $c$ exactly at the timestamp $\tau_j-1$.
\end{example}
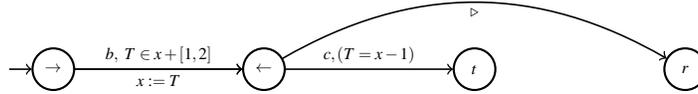
\begin{figure}
\begin{tikzpicture}[
        scale=0.7, 
        transform shape, 
        auto,
        node distance=4cm,
        semithick,
        initial text=]
      \tikzstyle{every state}=[thick]
      
\node[state,initial](S){$\rightarrow$};
\node[state](A)[right of=S]{$\leftarrow$};
\node[state](B)[right of = A]{$t$};
\node[state](C)[right of = B]{$r$};

\path[->]
(S) edge node[above] {$b,~T\in x+[1,2]$} (A)
(S) edge node[below] {$x:= T$} (A)
(A) edge node[above] {$c,(T=x-1)$} (B)
(A) edge  [bend left] node[below] {$\lend$} (C)
;
\end{tikzpicture}
\caption{Example of po2DTA}
\label{exam:autex1}
\end{figure}

\begin{definition}{[Timed Unambiguous Languages]}
The languages accepted by \potdta\/ are called Timed Unambiguous Languages (TUL).
\end{definition}

\section{From \mitlfp-fragments to \potdta}
In this section, we explore reductions from some fragments of Unary $\mathit{MITL}$ to \potdta. A powerful optimization becomes possible when dealing with the unary sublogics such as \mitlfpb\/ and \mitlfpinf. The truth of a modal formula $M_{I} \phi$ for a time point $\tau_i$ in an interval $I$ can be reduced to a simple condition involving time differences between $\tau_i$ and the times of \emph{first} and \emph{last} occurrences of $\phi$ within some  related intervals. We introduce some notation below.

\paragraph{Marking timed words with first and last $\phi$-positions\\}
Consider a formula $\phi$ in normal form, a timed word $\rho\in T\Sigma^*$ and an interval $I$. Let  $\idx^\phi_I(\rho) = \{ i \in dom(\rho) ~\mid~ \rho,i \models \phi \land \tau_i \in I\}$. Given set $S$ of positions in $\rho$ let $min(S)$ and $max(S)$ denote the smallest and largest positions in $S$, with the convention that
$min(\emptyset) =\#\rho$ and $max(\emptyset)=1$. Let $\frst^\phi_I(\rho) =
\tau_{min(\idx^\phi_I(\rho))}$ and   $\lst^\phi_I(\rho) =\tau_{max(\idx^\phi_I(\rho))}$ denote 
the times of first and last occurrence of $\phi$ within interval $I$ in word $\rho$.
If the subscript $I$ is omitted, it is assumed to be the default interval $[0,\infty)$.\\

The logic-automata translations that we give in this chapter are based on the following concepts:
\begin{itemize}
\item[i] In \cite{BMOW07}, the authors consider $\mathit{Bounded ~MTL}$ and show that the satisfiability problem for $\mathit{MITL[U_b]}$ over point-wise models is EXPSPACE-complete. This is done via translation to ATA. In \cite{BMOW08}, they show a similar result for continuous models, using model-theoretic methods, in which they construct a tableaux for the bounded formulas. The bounded size of the tableaux relies on the fact that there is a bound on the interval within which the truth of every subformula has to be evaluated. Our translation from \mitlfpb\/ also uses this concept.
\item[ii] On the other hand, \cite{MNP06} gives the translation of \mitl\/ formulas to ``Timed Transducers''. A key concept used here, is the fact that the variability within a unit interval of the truth of a subformula with non-punctual interval constraints is limited. 
\item[iii] Further, it is known that unary LTL (called Unary Temporal Logic) is expressively equivalent to \potdfa. In \ref{SShah12}, we gave a constructive reduction from \utl\/ to \potdfa. The novel concept used here is that for every \utl\/ subformula, it is sufficient to know the first and last positions in a word, where the subformula holds true. It is this concept, which justifies the expressive equivalence between the seemingly different properties of unaryness (of \utl) and determinism (of \potdfa). 
\end{itemize}
We combine the concepts (i), (ii) and (iii) described above to give translations from \mitlfpinf\/ to \potdta\/ and from \mitlfpb\/ to \potdta\/.

\subsection{From \mitlfpinf\/ to \potdta}
Fix an \mitlfpinf\/ formula $\Phi$ in normal form. We shall construct a language-equivalent \potdta\/ $\autm_\Phi$ by an inductive bottom-up construction. But first we assert an important property on which our automaton construction is based. 
\begin{lemma}\label{lem:inftychar} Given a timed word $\rho$ and $i\in dom(\rho)$,
\begin{enumerate}
 \item $\rho,i \models \fut_{[l,\infty)}\phi  \quad \fif  \quad 
  \tau_i \leq (\lst^\phi(\rho) - l) \land \tau_i <  \lst^\phi(\rho)$ 
 \item $\rho,i \models \fut_{(l,\infty)}\phi  \quad \fif  \quad 
  \tau_i < (\lst^\phi(\rho) - l)$ 
 \item $\rho,i \models \past_{[l,\infty)}\phi  \quad \fif  \quad 
  \tau_i \geq (\frst^\phi(\rho) + l) \land \tau_i >  \frst^\phi(\rho)$ 
 \item $\rho,i \models \past_{(l,\infty)}\phi  \quad \fif  \quad 
  \tau_i > (\frst^\phi(\rho) + l)$ 
\end{enumerate}
\end{lemma}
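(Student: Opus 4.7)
The plan is to establish the four equivalences by a uniform case analysis resting on two elementary observations: (a) whenever $\idx^\phi(\rho)$ is nonempty, the existence of some $j \in \idx^\phi(\rho)$ with $\tau_j \geq c$ is equivalent to $\lst^\phi(\rho) \geq c$, and dually the existence of some $j \in \idx^\phi(\rho)$ with $\tau_j \leq c$ is equivalent to $\frst^\phi(\rho) \leq c$; and (b) under strict monotonicity of timestamps, $j > i \iff \tau_j > \tau_i$. Observation (a) is the reason the lower-bound modalities collapse to a single condition on an extremal $\phi$-position: an unbounded interval $\langle l,\infty)$ imposes only a lower bound on the witness time, which is best verified against the latest $\phi$-position (symmetrically, the first $\phi$-position for the past modalities).

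I would prove case (1) in detail, then remark on the adaptations for the others. For the forward direction of (1), unfolding $\rho,i \models \fut_{[l,\infty)}\phi$ yields a witness $j > i$ with $\rho,j \models \phi$ and $\tau_j - \tau_i \geq l$. Since $j \in \idx^\phi(\rho)$, we get $\lst^\phi(\rho) \geq \tau_j \geq \tau_i + l$, i.e.\ $\tau_i \leq \lst^\phi(\rho) - l$; and by (b), $\lst^\phi(\rho) \geq \tau_j > \tau_i$. For the backward direction, take $k = \max(\idx^\phi(\rho))$ as the candidate witness. The hypothesis $\tau_i < \lst^\phi(\rho)$ rules out $\idx^\phi(\rho) = \emptyset$ (the convention would otherwise force $\lst^\phi(\rho) = \tau_1 = 0 \leq \tau_i$) and, via (b), gives $k > i$; the hypothesis $\tau_i \leq \lst^\phi(\rho) - l = \tau_k - l$ then shows $\tau_k - \tau_i \in [l,\infty)$, so $k$ witnesses $\rho,i \models \fut_{[l,\infty)}\phi$.

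Case (2) runs along the same lines but with the strict lower bound $\tau_j > \tau_i + l$, producing the single condition $\tau_i < \lst^\phi(\rho) - l$; since $l \geq 0$ this already implies $\tau_i < \lst^\phi(\rho)$, so no extra conjunct is needed. Cases (3) and (4) are the past-time mirrors, obtained by replacing $\lst^\phi(\rho)$ with $\frst^\phi(\rho)$, reversing the witness direction to $j < i$, and swapping lower bounds for upper bounds; the convention $\frst^\phi(\rho) = \tau_{\#\rho}$ when $\idx^\phi(\rho) = \emptyset$ again makes both sides of the equivalence false consistently (since $i \leq \#\rho$ forces $\tau_i \leq \tau_{\#\rho}$, so $\tau_i > \tau_{\#\rho}$ fails). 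The only real subtlety, rather than a genuine obstacle, is the second conjunct appearing in cases (1) and (3): it is required precisely when $l = 0$, where the strict-future (resp.\ strict-past) requirement built into the semantics of $\fut$ (resp.\ $\past$) is \emph{not} subsumed by the non-strict lower bound; for $l > 0$ this conjunct is redundant.
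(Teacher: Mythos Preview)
Your argument is correct and follows essentially the same approach as the paper's: both rely on strict monotonicity to identify position order with timestamp order, take the extremal $\phi$-position ($\max$ for $\fut$, $\min$ for $\past$) as the canonical witness, and use the conventions on $\lst^\phi(\rho)$ and $\frst^\phi(\rho)$ to dispose of the empty case. The only difference is organizational (you split into forward/backward directions and handle emptiness inline, whereas the paper splits into nonempty/empty cases), and your remark on why the second conjunct in (1) and (3) is needed precisely when $l=0$ is a helpful clarification not spelled out in the original.
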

\begin{proof} We give the proof only for part (1). Remaining parts can be proved similarly.
\paragraph{Case 1} $\idx^\phi(\rho) \not= \emptyset$. Let $max(\idx^\phi(\rho))=j$.
Then, $\tau_j=\lst^\phi(\rho)$. Now, \\
$\begin{array}{ll}
       & \rho,i \models \fut{(l,\infty)} \phi \\
 \fif &  \tau_i \leq \tau_j -l \land i < j \\
\fif &  \tau_i \leq \tau_j -l \land \tau_i < \tau_j ~~~~ \mbox{(by strict monotonicity of the timed words)}\\
 \fif &  \tau_i \leq \lst^\phi(\rho) - l \land \tau_i < \lst^\phi(\rho)
\end{array}
$
\paragraph{Case 2} $\idx^\phi(\rho) = \emptyset$. We show that both LHS and RHS are false. For any $i \in dom(\rho)$ we have,  $\rho,i \not \models \fut_{[l,\infty)}\phi $. Also, $\lst^\phi(\rho)=0$. Hence,  conjunct $\tau_i < L^\phi(\rho)$ of RHS does not hold.
\end{proof}

The above lemma shows that truth of $\fut_{\langle l , \infty)} \phi$ (or $\past_{\langle l , \infty)} \phi$) at a position can be determined by knowing the value of $\lst^\phi(\rho)$ (respectively, $\frst^\phi(\rho)$). Hence for each $\fut$-type modarg $\phi$ of $\Phi$, we introduce a clock $y^\phi$ to freeze the value $\lst^{\phi}(\rho)$ and $\past$-type modarg $\phi$ of $\Phi$, we introduce a clock $x^\phi$ to freeze the value $\frst^{\phi}(\rho)$.

\begin{figure}
\begin{minipage}{0.4\linewidth}
\begin{tabular}{|l|l|}
\hline
&\\
[-1.5ex]
$\psi$ & $cond(\psi)$ \\ 
\hline
$\fut_{[l,\infty)}\phi$ & $T \leq  (y^\phi - l) ~\land~ T<y^\phi$ \\ 
\hline 
$\fut_{(l,\infty)}\phi$ & $ T <  (y^\phi - l)$ \\ 
\hline 
$\past_{[l,\infty)}\phi$ & $ T \geq  (x^\phi + l) ~\land ~T>x^\phi$ \\ 
\hline 
$\past_{(l,\infty)}\phi$ & $ T >  (x^\phi + l)$ \\ 
\hline 
\end{tabular}
\end{minipage}
\begin{minipage}{0.5\linewidth}
\begin{tikzpicture}[scale = 0.7, transform shape, auto,
        node distance=4cm,
        semithick,
        initial text=]
\tikzstyle{every state}=[thick]
      
\node[state,initial](S){$\rightarrow$};
\node[state](A)[right of=S]{$\leftarrow$};
\node[state](B)[right of=A]{$t$};

\path[->]
(S) edge node[above] {$\rend$} (A)

(A) edge[bend right] node[below] {$a_j,~\mathcal G(\phi,a_j) ~,~ y^\phi :=T$} (B)
(A) edge[bend left] node[above] {$a_1,~\mathcal G(\phi,a_1) ~,~ y^\phi :=T$} (B)
(A) edge node[above] {$\lend$} (B)
;
\end{tikzpicture}
\end{minipage}
\caption{Table for $cond(\psi)$ and automaton $\autm(\phi)$ for an $\fut$-type $\phi$.}
\label{fig:infty}
\end{figure}

Now we give the inductive step of automaton construction. 
Consider an $\fut$-type modarg $\phi$. The automaton $\autm(\phi)$ is as shown in Figure \ref{fig:infty}. 
If $\phi = \lor_{a \in \Sigma} ~ (a~ \land {\mathscr B}_a(\{\psi_i\}))$, then for every $a\in\Sigma$, we derive the guard $\mathcal G(\phi,a)$ which is the guard on the transition labelled by $a$ in $\autm(\phi)$, such that the transition is enabled is taken if and only if $a~\land\phi$ is satisfied at that position. This is given by $\mathcal G(\phi,a) = \mathscr B_a(cond(\psi_i))$.
To define $cond(\psi_i)$, let variable $T$ denote the time stamp of current position. Then, the condition for checking truth of a modal subformula $\psi$ is a direct encoding of the conditions in lemma \ref{lem:inftychar} and is given in the table in figure \ref{fig:infty}.
 It is now straightforward to see that $\autm(\phi)$ clocks exactly the last position in the word, where $\phi$ holds.  A symmetrical construction can be given for  $\past$-type modarg $\phi$, for which $\autm(\phi)$ clocks the first position in the word where $\phi$ holds.
The following lemma states its key property which is obvious from
the construction. Hence we omit its  proof.
\begin{lemma}\label{lem:infty}
Given a  modarg $\phi$  and any timed word $\rho$, let $\nu_0$ be a valuation where $\nu_0(x^{\delta})=\frst^{\delta}(\rho)$ and 
$\nu_0(y^\delta)=\lst^{\delta}(\rho)$ for each modarg subformula $\delta$ of $\phi$, and $\nu_0(x^\phi)=\tau_{\#\rho}$ and $\nu_0(y^\phi)=0$. If
$\nu$ is the clock valuation at the end of the run of $\autm(\phi)$ starting with $\nu_0$, then
$\nu(x^\delta)=\nu_0(x^\delta)$, $\nu(y^\delta)=\nu_0(y^\delta)$ for each $\delta$, and additionally,
\begin{itemize}
\item if $\phi$ is $\fut$ modarg then $\nu(y^\phi) = \lst^\phi(\rho)$.
\item  if $\phi$ is $\past$ modarg then $\nu(x^\phi) = \frst^\phi(\rho)$.
\end{itemize}
\end{lemma}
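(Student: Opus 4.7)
The plan is to prove the lemma by tracing the unique run of $\autm(\phi)$ from the configuration $(s,\nu_0,1)$ to its terminal configuration, exploiting the very restricted shape of $\autm(\phi)$ in Figure \ref{fig:infty}. I would first handle the $\fut$-type case; the $\past$-type case is symmetric, with the roles of $\lend/\rend$, left/right scanning, and $x^\phi/y^\phi$ swapped.

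\textbf{Step 1 (preservation of subformula clocks).} The first observation is that along the progress transitions of $\autm(\phi)$ the only clock that ever appears in a reset set is $y^\phi$; the ``else'' self-loops reset nothing. Hence, for every $\delta$ strictly smaller than $\phi$ in the DAG, neither $x^\delta$ nor $y^\delta$ is ever touched during the run. This immediately yields $\nu(x^\delta)=\nu_0(x^\delta)$ and $\nu(y^\delta)=\nu_0(y^\delta)$, which in particular means that throughout the run these clocks still carry the values $\frst^\delta(\rho)$ and $\lst^\delta(\rho)$.

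\textbf{Step 2 (guards evaluate to truth of $\phi$).} Since $\phi$ is in normal form we have $\phi \equiv \bigvee_{a\in\Sigma}(a\land \mathscr{B}_a(\{\psi_i\}))$, and its immediate modal subformulas $\psi_i$ are of one of the four shapes handled by Lemma \ref{lem:inftychar}. By construction $cond(\psi_i)$ is a direct syntactic encoding of the right-hand side of the corresponding clause of that lemma, using the clocks $y^{\delta_i}$ or $x^{\delta_i}$ for the modarg $\delta_i$ inside $\psi_i$. Combined with Step 1 and the assumption on $\nu_0$, the value of $cond(\psi_i)$ at a configuration whose head is on position $j$ (with $T:=\tau_j$) is precisely the truth value of $\rho,j\models \psi_i$. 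Hence when the head is on a position $j$ with $\sigma_j=a$, the guard $\mathcal G(\phi,a)$ holds iff $\rho,j\models \phi$.

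\textbf{Step 3 (tracing the run).} From $(s,\nu_0,1)$ the automaton self-loops rightward, keeping clocks untouched, until it reads $\rend$ and transitions to the left-scanning state $\leftarrow$. It then self-loops leftward through $\rho$ until it finds the first position $j$ at which $\mathcal G(\phi,\sigma_j)$ is enabled, at which moment it resets $y^\phi:=\tau_j$ and moves to $t$. By Step 2, this $j$ is the rightmost position of $\rho$ at which $\phi$ holds, i.e.\ $j=\max(\idx^\phi(\rho))$, so $\nu(y^\phi)=\tau_j=\lst^\phi(\rho)$. If no such $j$ exists, the head reaches $\lend$ and takes the $\lend$-transition to $t$ without any reset; the clock $y^\phi$ then retains its initial value $0$, which by the convention $\max(\emptyset)=1$ and $\tau_1=0$ equals $\lst^\phi(\rho)$. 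Together with Step 1 this gives the $\fut$ clause of the conclusion.

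\textbf{Main obstacle.} The only real content is Step 2: one must make precise that evaluating $cond(\psi_i)$ on the current clock valuation and current time $T$ is literally the right-hand side of Lemma \ref{lem:inftychar} applied with $\frst^{\delta_i}(\rho)$ and $\lst^{\delta_i}(\rho)$. This is where the hypothesis on $\nu_0$ and the fact that the strict-subformula clocks are never disturbed during the run (Step 1) are crucially combined. Everything else is direct bookkeeping about the three-state automaton.
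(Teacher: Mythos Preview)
Your proposal is correct and is precisely the detailed expansion of what the paper leaves implicit: the paper omits the proof entirely, stating only that the lemma ``is obvious from the construction'' after the one-line observation that ``$\autm(\phi)$ clocks exactly the last position in the word where $\phi$ holds.'' Your three steps---clock preservation, guard-correctness via Lemma~\ref{lem:inftychar}, and tracing the two-phase run---make this explicit in the natural way and match the intended reasoning.
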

\oomit{
\begin{proof}
Consider an $\fut$-type modarg $\phi$. The automaton $\autm(\phi)$ is as shown in figure \ref{fig:infty}. 
If $\phi = \lor_{a \in \Sigma} ~ (p_a \land {\mathscr B}_a(\{\psi_i\}))$, then for the clock $y^\phi$, and $a\in\Sigma$, we derive the guard $\mathcal G(\phi,a)$ which is the guard on the transition labelled by $a$ in $\autm(\phi)$, and which resets $y^\phi$. This is given by $\mathcal G(\phi,a) = \mathscr B_a(cond(\psi_i))$.  It is now straightforward to see that $\autm(\phi)$ clocks exactly the last position in the word, where $\phi$ holds. 
\end{proof}
}
\begin{theorem}
For any \mitlfpinf\/ formula $\Phi$, there is a language-equivalent \potdta\/ $\autm(\Phi)$ whose size is linear in the modal-DAG size of the formula. Hence, satisfiability of $\mitlfpinf$ is in \np.
\end{theorem}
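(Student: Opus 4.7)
The plan is to construct $\autm(\Phi)$ by an inductive bottom-up traversal of the modal-DAG of $\Phi$, chaining together copies of the per-modarg automaton of Figure \ref{fig:infty}. First, I topologically order the modargs $\phi_1, \ldots, \phi_n$ of $\Phi$ so that every modarg appears after all of its strict modarg subformulas, and I allocate a fresh clock for each: $x^{\phi_i}$ when $\phi_i$ is a $\past$-modarg and $y^{\phi_i}$ when it is an $\fut$-modarg. The composite automaton executes the sub-automata $\autm(\phi_1), \ldots, \autm(\phi_n)$ as a sequence of one-way passes, with the accept state of each feeding into the initial state of the next via an end-marker transition; stacking them this way respects the \potdta\/ partial-order requirement because each subsequent pass uses fresh states lying strictly below the previous ones. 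Inside the $i$-th pass the guard $\mathcal{G}(\phi_i,a)$ refers only to clocks of strict submodargs of $\phi_i$, which, by the induction hypothesis, already hold the correct $\frst$/$\lst$ values, so Lemma \ref{lem:infty} applies inductively and the new clock is correctly set to $\lst^{\phi_i}(\rho)$ or $\frst^{\phi_i}(\rho)$ at the end of that pass.

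After the $n$-th pass, I append a short final phase that repositions the head at $1$ and evaluates the top-level $\Phi = \bigvee_{a\in\Sigma}(a\land \mathscr{B}_a(\{\psi_j\}))$ directly. Because $\tau_1 = 0$ by convention and the clocks now hold first/last times of every immediate modal subformula, each $\psi_j$ at position $1$ reduces, via Lemma \ref{lem:inftychar}, to an arithmetic condition exactly like those tabulated in Figure \ref{fig:infty}; the outer Boolean combination is then decided by a bounded tree of guarded transitions ending in $t$ or $r$. Correctness of the whole automaton follows by induction on the modal-DAG depth, with the base case consisting of modargs whose guards mention no submodarg clocks and the inductive step being Lemma \ref{lem:infty} combined with Lemma \ref{lem:inftychar}.

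For the size bound, each $\autm(\phi_i)$ contributes a constant number of states and transitions, and the guards on those transitions mention only constants appearing directly in $\phi_i$, whose binary encoding has bit-size $O(\log k_i)$ where $k_i$ is the product of those local constants. The final top-level block adds $O(|\Sigma|+m)$ states, where $m$ is the number of immediate modargs of $\Phi$. Summing across all $n$ modargs gives total size $O\bigl(n + \sum_i \log k_i\bigr) = O(n + \log k)$, which is linear in the modal-DAG size of $\Phi$. The main care point, rather than any deep obstacle, is the DAG-based bookkeeping: a modarg shared by several parents must not be duplicated, which is precisely why the traversal is performed on the DAG and not on the syntax tree. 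Since the reduction $\Phi \mapsto \autm(\Phi)$ therefore runs in polynomial time and \potdta\/ non-emptiness is in \np\/ (by the small-model property of \potdta\/ recalled above), we conclude that satisfiability of \mitlfpinf\/ lies in \np.
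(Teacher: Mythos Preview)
Your approach is essentially the paper's: bottom-up sequential composition of the per-modarg automata from Figure~\ref{fig:infty}, followed by a final check of $\Phi$ at position~$1$, with the linear size bound coming from DAG sharing.

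One point you omit but the paper makes explicit: Lemma~\ref{lem:infty} requires as precondition that $\nu_0(x^{\phi})=\tau_{\#\rho}$ for each $\past$-type modarg $\phi$ (so that the guard $T>x^{\phi}$ correctly fails everywhere when $\phi$ holds nowhere). Since $\nu_{\mathit{init}}$ maps all clocks to $0$, this is not automatic; the paper therefore prepends a single initialisation pass that sweeps to $\rend$ and resets every $x^{\phi}$ to $\tau_{\#\rho}$ before the modarg passes begin. Without that pass your inductive appeal to Lemma~\ref{lem:infty} is not justified in the corner case where some $\past$-modarg's argument is unsatisfiable in $\rho$. The fix is one extra constant-size pass, so the size bound and the \np\ conclusion are unaffected. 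You should also state that $\Phi$ is first put into normal form (Proposition~\ref{prop:norm}), since the per-modarg automaton and the guards $\mathcal G(\phi,a)$ presuppose that shape.
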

\begin{proof}
Assume that $\Phi$ is in the normal form as described in Definition \ref{def:norm}. Note that reduction to normal form results in a linear blow-up in the modal-DAG size of the formula (Proposition \ref{prop:norm}). 
The construction of the complete automaton $\autm(\Phi)$ is as follows.
In an initial pass, all the $x^\phi$ clocks are
set to $\tau_{\#\rho}$. Then, the component automata $\autm(\phi)$ for clocking modargs ($\phi$) are composed in sequence with innermost modargs being evaluated first. This bottom-up construction, gives us the initial-valuation conditions at every level of induction, as required in Lemma \ref{lem:infty}. Finally, the validity of $\Phi$ at the first position may be checked. 

This construction, gives a language-equivalent \potdta\/ whose number of states is linear in the number of nodes in the DAG of $\Phi$ and the largest constant in the guards of $\autm(\Phi)$ is equal to the largest constant in the interval constraints of $\Phi$. From \cite{PS10}, we know that the non-emptiness of $\autm(\Phi)$ may be checked in \np-time. Hence we can conclude that satisfiability of \mitlfpinf\/ formulas is decidable in \np-time.
\end{proof}
 

\subsection{From \potdta\/ to \mitlfpinf}
\begin{theorem}
Given a \potdta\/ $\autm$, we may derive an equivalent \mitlfpinf\/ formula $\phi_{\autm}$ such that $\mathcal L(\autm) = \mathcal L(\phi_{\autm})$
\end{theorem}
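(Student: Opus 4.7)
The plan is to construct $\phi_{\autm}$ by induction on the partial order $(Q,\leq)$ of states of $\autm$. Since $t$ and $r$ are the only minimal elements and $s$ is the only maximal element, we process states bottom-up. For each non-terminal state $q$ we build a formula $\phi_q$ that holds at exactly those positions $i$ from which the unique continuation of the run that enters $q$ at $i$ eventually reaches the accept state $t$; the desired formula $\phi_{\autm}$ is then $\phi_s$ evaluated at the left end-marker with the initial clock valuation $\nu_{init}$.

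The inductive step relies on determinism combined with the partial order. Consider a state $q \in Q_L$ (entered from the left, scanning rightward). The automaton self-loops in $q$ until the first strictly later position at which some outgoing progress guard $g_i$ is enabled on letter $a_i$; by the pairwise disjointness of guards, the successor $q_i < q$ and the set of clocks to reset are uniquely determined by this position. Hence $\phi_q$ can be expressed as: ``at the first later position where some $a_i \wedge g_i$ holds, $\phi_{q_i}$ holds under the appropriate clock reset''. A symmetric construction using past modalities handles $q \in Q_R$, and the base cases $\phi_t = \top$, $\phi_r = \bot$ are immediate. This ``first witness in a prescribed direction'' pattern is precisely what deterministic future and past modalities are built to capture.

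The main obstacle is that \potdta\/ guards freely use punctual equalities such as $T = x + c$ and upper bounds such as $T - x < c$, whereas \mitlfpinf\/ permits only lower-bound intervals $\langle l, \infty )$. My plan is to factor the reduction through the intermediate logic \ttl, in line with the introduction's claim that \mitlfpinf\/ and \ttl\/ are expressively equivalent. The first step \potdta\/ $\to$ \ttl\/ is routine: freeze quantifiers in \ttl\/ can name the current time at each progress transition, and its deterministic next/previous modalities select the unique firing position, so the inductive construction above transfers essentially verbatim. The second step \ttl\/ $\to$ \mitlfpinf\/ is the delicate part and is obtained by adapting the encoding of \cite{PS11}: because every modality involved is deterministic, an upper-bound or punctual constraint on a unique witness can be rephrased as a Boolean combination of assertions about first and last occurrences in semi-infinite intervals, which \mitlfpinf\/ expresses directly via $\fut_{\langle l,\infty)}$ and $\past_{\langle l,\infty)}$ together with negation in the pattern of Lemma \ref{lem:inftychar}. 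Composing the two translations yields the required $\phi_{\autm}$ with $\mathcal{L}(\autm) = \mathcal{L}(\phi_{\autm})$.
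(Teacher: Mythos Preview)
Your proposal diverges from the paper's proof and has a real gap in the state-based induction. The paper does not factor through \ttl\ and does not induct on states; it inducts on \emph{paths} $\pi$ of progress edges from $s$, constructing for each path a formula $\mathit{Enable}(\pi)$ that holds at the unique position where the last edge of $\pi$ fires. Because the run along a fixed $\pi$ is determined, every clock $x$ has a well-defined reset position, namely the unique position satisfying $\mathit{Enable}(\mathit{pref}(\pi,x))$. An arbitrary guard such as $T-x=c$ is then encoded directly in \mitlfpinf\ as $\past_{[c,\infty)}\mathit{Enable}(\mathit{pref}(\pi,x)) \land \neg\past_{(c,\infty)}\mathit{Enable}(\mathit{pref}(\pi,x))$, and the final formula is a disjunction over all accepting paths. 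The crux is that uniqueness of the reset position lets lower-bound modalities plus negation simulate upper bounds and punctuality.

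Your formula $\phi_q$, by contrast, is not well-defined as stated: the continuation from a configuration $(q,\nu,i)$ depends on $\nu$, not only on $(q,i)$, so ``holds at position $i$'' cannot by itself determine acceptance. To make the state-based scheme work you would need $\phi_q$ to carry free freeze variables for the clocks, which you hint at by invoking \ttl\ but never make explicit; and then your \ttl$\to$\mitlfpinf\ step must translate open formulas with free freeze variables, which is precisely the hard part you defer to an unspecified adaptation of \cite{PS11}. The paper's path-based induction sidesteps this entirely, since fixing $\pi$ fixes the clock history and no free variables are ever needed. (Your appeal to Lemma~\ref{lem:inftychar} is also in the wrong direction: that lemma supports the \mitlfpinf$\to$\potdta\ translation, not the converse.)
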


We shall first illustrate the reduction of $\potdta$ to $\mitlfpinf$ by giving a language 
equivalent \mitlfpinf\/ formula for the \potdta\/ in Example \ref{exam:potdta}. This \potdta\/ first scans in the forward direction and clocks the first $b$ in the time interval $[1,2]$ (this is a bounded constraint), and then checks if there is a $c$ exactly 1 time unit to its past by a backward scan (this is a punctual constraint).
The automaton contains guards with both upper and lower bound constraints as well as a punctual constraints. It is critical for our reduction that the progress transitions are satisfied at unique positions in the word. 

Consider the following \mitlfpinf\/ formulas. Define $Atfirst := \neg \past\top$ as the formula which holds only at the first position in the word.\\
\hspace*{1cm}
$\phi_1 := b ~\land \past_{[1,\infty)}Atfirst ~\land \neg\past_{(2,\infty)}Atfirst$\\
\hspace*{1cm}
$\phi_2 := \phi_1 \land \neg\past_{(0,\infty)}\phi_1$\\
\hspace*{1cm}
$\Phi := \fut_{[0,\infty)} [\phi_2\land \past_{[1,\infty)}(c\land 
\neg\fut_{(1,\infty)}\phi_2 )~]$ \\
The formula $\phi_1$ holds at any $b$ within the time interval $[1,2]$. The formula $\phi_2$ holds at the \textit{unique} first $b$ in $[1,2]$. The formula $\Phi$ holds at the initial position in a word iff the first $b$ in $[1,2]$ has a $c$ exactly 1 time unit behind it. Note that the correctness of $\Phi$ relies on the \emph{uniqueness} of the position where $\phi_2$ holds. The uniqueness of the positions at which the progress transitions are taken, is the key property that allows us to express even punctual constraints (occurring in the guards of progress transitions) using only lower-bound constraints as interval constraints in the formula. It is easy to verify that the \mitlfpinf\/ formula $\Phi$ exactly accepts the timed words that are accepted by the \potdta\/ in example \ref{exam:potdta}.

\paragraph{Translation from \potdta\/ to \mitlfpinf\\}
Consider  \potdta\/ $\autm$. We shall derive a language-equivalent \mitlfpinf\/ formula $\phi_\autm$ for the automaton. Since \potdta\/ run on words that are delimited by end-markers, for the sake of simplicity in presentation, we shall derive the corresponding \mitlfpinf\/ formula over the extended alphabet $\Sigma'=\Sigma \cup \{\lend,\rend\}$. However a language equivalent formula over $\Sigma$ may be derived with minor modifications to the construction described below.

\begin{theorem}
Given a \potdta\/ $\autm$, we may derive an \mitlfpinf\/ formula $\phi_\autm$ such that $\forall \rho\in T\Sigma^* ~.~ \rho\in \mathcal L(\autm) ~\Leftrightarrow~ \rho'\in \mathcal L(\phi_\autm)$. The size of the formula is exponential in the size of the automaton.
\end{theorem}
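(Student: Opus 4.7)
My plan is to construct $\phi_\autm$ by induction on the partial order $(Q,\leq)$, working top-down from $s$. For each non-terminal state $q$ and each outgoing progress transition $\delta(q,a,g)=(q',R)$, I build an \mitlfpinf\/ formula $\psi_{q,a,g}$ that holds at a position $i$ iff, during the deterministic run of $\autm$ on the given word, it is at $i$ that this transition fires. Setting $\psi_q := \bigvee_{a,g}\psi_{q,a,g}$, the key invariant to maintain is that on any word $\psi_q$ holds at \emph{at most one} position. The acceptance formula will then be $\phi_\autm := \fut_{[0,\infty)}\bigl(\bigvee \psi_q\bigr)$, where the disjunction ranges over states $q$ with an outgoing transition to $t$, evaluated at position $1$ of the endmarked word (with a minor tweak to include $i=1$ itself).

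For the base case $q=s$, every clock is $0$, so each guard reduces to a constraint on the current time $T$ alone. Using $Atfirst := \neg\past_{[0,\infty)}\top$, which marks position $1$, I encode $T\geq c$ as $\past_{[c,\infty)}Atfirst$ and $T\leq c$ as $\neg\past_{(c,\infty)}Atfirst$; conjoining yields a formula $g'$ describing the time interval of the guard. The firing position, namely the first position with $a\land g'$ in $s$'s scan direction, is then $a\land g'\land \neg\past_{(0,\infty)}(a\land g')$ for $s\in Q_L$, and the mirror $\fut$-formula for $s\in Q_R$.

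For the inductive step, fix a non-initial $q$. On any accepted word, $q$ is entered via a unique incoming transition, but across words this entry transition may differ; I therefore take a disjunction over all simple paths in the progress graph from $s$ to $q$. Fixing one such path, the history of clock resets is determined, so for each clock $x$ there is a unique earlier transition whose firing last reset $x$, with associated formula $\chi_x$ (or $\chi_x := Atfirst$ if $x$ is untouched). A guard atom $T-x\approx c$ then translates, using the uniqueness of the $\chi_x$-position, via
\begin{align*}
T-x=c \quad&\mapsto\quad \past_{[c,\infty)}\chi_x \,\land\, \neg\past_{(c,\infty)}\chi_x,\\
T-x\geq c \quad&\mapsto\quad \past_{[c,\infty)}\chi_x,
\end{align*}
and symmetric $\fut$-translations for $x-T\approx c$. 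Conjoining atoms yields a formula $\theta_{a,g}$ holding exactly at positions meeting $(a,g)$ under the clock configuration dictated by the fixed path. The firing position of $q$'s transition, i.e.\ the first $\theta_{a,g}$-position strictly after the unique position satisfying $\psi_{q',a_0,g_0}$ (where $(q',a_0,g_0)$ is the last transition of the path) in $q$'s scan direction, is captured for $q\in Q_L$ by
\[
\theta_{a,g}\,\land\,\past_{(0,\infty)}\psi_{q',a_0,g_0}\,\land\,\neg\past_{(0,\infty)}\bigl(\theta_{a,g}\land\past_{(0,\infty)}\psi_{q',a_0,g_0}\bigr),
\]
and by the symmetric $\fut$-formula for $q\in Q_R$.

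The main obstacle, and the source of the exponential size blowup in the theorem, is that $\psi_q$ must range over all $s$-to-$q$ paths, since distinct accepted words may traverse distinct paths and thereby induce distinct clock reset histories; the number of such paths in the progress DAG can be exponential in $|Q|$, and on top of that each inductive step embeds several copies of ancestor formulas (one per clock appearing in the guard). Correctness throughout hinges crucially on the uniqueness of each $\psi_{q',a_0,g_0}$-position, itself a consequence of determinism combined with the partial-order restriction on $Q$: it is precisely this uniqueness that makes the idiom $\past_{[c,\infty)}\chi\land \neg\past_{(c,\infty)}\chi$ encode a punctual clock constraint using only lower-bound modalities, and allows the ``first $\theta$ after $\psi_{q'}$'' idiom to define a unique position in unary logic without an until operator.
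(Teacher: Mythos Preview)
Your approach is essentially the paper's: the paper also inducts over paths of progress edges from $s$, builds a formula $\mathit{Enable}(\pi)$ pinpointing the unique firing position of the last edge of $\pi$, encodes each clock atom $T-x\approx c$ via $\past_{[c,\infty)}/\past_{(c,\infty)}$ (resp.\ $\fut$) applied to the formula for the $x$-resetting prefix of $\pi$, and finally disjuncts over all $s$-to-$t$ paths. Your organizing-by-state-then-disjuncting-over-incoming-paths versus the paper's organizing-by-path-directly is only a packaging difference; the uniqueness insight and the exponential blowup source are identical.

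There is, however, one real slip in your inductive step. Your firing formula
\[
\theta_{a,g}\,\land\,\past_{(0,\infty)}\psi_{q',a_0,g_0}\,\land\,\neg\past_{(0,\infty)}\bigl(\theta_{a,g}\land\past_{(0,\infty)}\psi_{q',a_0,g_0}\bigr)
\]
marks the first $(a,g)$-enabled position after entry into $q$, but the automaton may leave $q$ \emph{earlier} via a different outgoing progress transition $(a',g')$. In that case your formula holds at a position the run never visits in state $q$, which breaks both correctness and the single-position invariant you rely on downstream. The fix is to let the ``no earlier firing'' conjunct range over \emph{all} outgoing transitions of $q$: replace the inner $\theta_{a,g}$ by $\bigvee_{(a',g')\in\mathit{trans}(q)}\theta_{a',g'}$. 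This is precisely what the paper does---its third conjunct of $\mathit{Enable}(\pi{:}e_i)$ is $\neg\bigvee_{(a,g)\in trans(q)}\past(a\land gsat(\pi,g)\land\past\,\mathit{Enable}(\pi))$, quantifying over the whole set $trans(q)$ rather than just the edge being extended.
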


Given any path $\pi$ of progress edges starting from the start state $s$ of $\autm$, we shall derive an \mitlfpinf\/ formula $\mathit{Enable}(\pi)$ such that the following lemma holds.

\begin{lemma}
If $\pi$ is a path of progress edges in $\autm$ which begins from the start state, we may construct an \mitlfpinf\/ formula $\mathit{Enable}(\pi)$ such that for any timed word $\rho$, there exists a partial run of $\autm$ on $\rho$ which traverses exactly the progress edges in $\pi$ and whose last transition is enabled at position $p\in dom(\rho')$ if and only if $\rho,p\models \mathit{Enable}(\pi)$.
\end{lemma}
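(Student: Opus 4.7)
I plan to define $\mathit{Enable}(\pi)$ by induction on the length of $\pi=e_1\cdots e_k$, simultaneously defining, for every prefix $\pi_j=e_1\cdots e_j$, a formula $\mathit{At}(\pi_j)$ that holds at exactly one position in $\rho'$, namely the position at which the $j$-th progress edge of any run following $\pi$ fires. Uniqueness of $\mathit{At}(\pi_j)$, which will follow from the determinism of $\autm$ together with a ``first-after'' shape used in the construction, is the crucial handle: exactly as in the worked example ($\phi_2$), it lets a punctual clock atom $T-x=c$ be encoded as $\past_{[c,\infty)}\mathit{At}(\pi_{r(x)}) \land \lnot \past_{(c,\infty)}\mathit{At}(\pi_{r(x)})$, where $\pi_{r(x)}$ is the shortest prefix whose last edge most recently resets $x$. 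This idea extends uniformly to comparisons $<,\leq,>,\geq$, and switches to symmetric $\fut$-formulas whenever the geometry of $\pi$ (determined by the alternation of $Q_L/Q_R$ states along the path) puts the reset position to the right of the current firing position. Call the resulting $\mitlfpinf$ translation of a guard $g$ by $\mathrm{Tr}_\pi(g)$; clocks never reset along $\pi$ default to $0$ and are anchored to the initial end-marker via $\mathit{At}(\emptyset):=\lend \land \lnot\past_{(0,\infty)}\top$.

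For the inductive step, write $\pi=\pi'\cdot e$ with $e=\delta(q,a,g)=(q',R)$, and let $\{\delta(q,a_j,g_j)\}_j$ enumerate all progress edges out of $q$. Between $\mathit{At}(\pi')$ and the firing of $e$, the run sits in state $q$, scans monotonically in the direction dictated by $q$, and must avoid firing any other progress edge. Putting $\mathit{SomeProgress}_q := \bigvee_j\bigl(a_j \land \mathrm{Tr}_\pi(g_j)\bigr)$, I define, for $q\in Q_L$ (rightward scan):
\[
\mathit{Enable}(\pi) ~:=~ a \land \mathrm{Tr}_\pi(g) \land \past_{(0,\infty)}\mathit{At}(\pi') \land \lnot \past_{(0,\infty)}\bigl(\mathit{SomeProgress}_q \land \past_{(0,\infty)}\mathit{At}(\pi')\bigr),
\]
with the mirror formula (swapping $\past$ for $\fut$) when $q\in Q_R$. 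Because $\mathit{At}(\pi')$ is singleton, the conjunct $\past_{(0,\infty)}\mathit{At}(\pi')$ genuinely means ``strictly after the previous firing,'' and the last conjunct then rules out every intermediate position at which some other progress edge of $q$ would have preempted $e$. I then set $\mathit{At}(\pi):=\mathit{Enable}(\pi)$; by construction it holds at a unique position (the first position in the scan direction of $q$ past $\mathit{At}(\pi')$ where $a$ and $\mathrm{Tr}_\pi(g)$ jointly hold with no earlier-enabled alternative).

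Correctness is a routine induction on $|\pi|$: the inductive hypothesis supplies the unique position of the previous firing, the guard translation faithfully encodes the actual clock valuations thanks to the uniqueness of the reset-anchoring prefixes, and the ``first-after'' clause matches the deterministic ``else'' self-loop on $q$; conversely, any position satisfying $\mathit{Enable}(\pi)$ reconstructs such a run by determinism. The size grows by a factor linear in the out-degree of $q$ and in the number of atoms of $g$ at each step, and since the partial order forces $|\pi|\le|Q|$, we obtain the exponential bound in the statement. The main obstacle I foresee is the intermediate-position clause: $\mitlfpinf$ lacks an ``until'' operator, so it is not obvious a priori that ``no progress edge of $q$ was enabled strictly between the two firings'' is expressible. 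The resolution is the singleton property of $\mathit{At}(\pi')$, which makes the nested $\past_{(0,\infty)}$ act as a bounded search anchored strictly after a single point; one has to carefully verify that this singleton property propagates through the induction, particularly since the $\mathrm{Tr}_\pi(g_j)$ themselves refer back to $\mathit{At}$ formulas of earlier prefixes produced in the same induction.
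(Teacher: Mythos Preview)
Your proposal is correct and follows essentially the same approach as the paper: induction on $|\pi|$, a guard translation $\mathrm{Tr}_\pi$ (the paper's $gsat(\pi,\cdot)$) anchored to the unique position where the reset-prefix's $\mathit{Enable}$ formula holds (the paper's $pref(\pi,x)$ playing the role of your $\pi_{r(x)}$), and the identical ``first-after'' inductive-step formula for $\mathit{Enable}(\pi'\cdot e)$ built from $\past$/$\fut$ plus a negated nested modality ruling out earlier-enabled progress edges of $q$. The only cosmetic differences are that the paper first normalizes $\autm$ so each clock is reset at most once (sidestepping your ``most recent reset'' bookkeeping) and does not introduce a separate name $\mathit{At}$ since it coincides with $\mathit{Enable}$; one small notational slip in your write-up is that the guard translation in the inductive step should be $\mathrm{Tr}_{\pi'}$ rather than $\mathrm{Tr}_\pi$, since the guard of $e$ is evaluated with the valuation \emph{before} $e$ fires.
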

\begin{proof}
We shall derive $\mathit{Enable}(\pi)$ by induction on the length of $\pi$. For the empty path (denoted as $<>$), we have $\mathit{Enable}(<>) ~=~ \neg\past_{(0,\infty)}\top$ which holds exactly at position 0 in $\rho'$. Now, let us inductively assume that the formula $\mathit{Enable}(\pi)$ for some path $\pi$ in $\autm$ (as shown in Figure \ref{fig:autm2form}) is appropriately constructed. We shall construct $\mathit{Enable}(\pi:e_i)$, where $\pi:e_i$ denotes the path $\pi$ that is appended with the edge $e_i$.   For each  $q$ in $\autm$, let $trans(q)$ denote the set of event-guard pairs $(a,g)$, over which a progress transition from $q$ is defined. 

Firstly, assume that each clock in $\autm$ is reset at most once\footnote{Due to the partial-ordering of the \potdta\/ and the restriction of resetting clocks only on progress edges, it is easy to see that every \potdta\/ can be reduced to one which resets every clock at most once.}. Now let $pref(\pi,x)$ denote the prefix of $\pi$ which ends with the transition that resets $x$. Hence,\\
\begin{tabular}{l|l}
&$<>$ if $x$ is not reset on any edge in $\pi$\\
$pref(\pi,x)$ =&\\
&$e_1...e_l$, which is a prefix of $\pi$ such that $x$ is reset on $e_l$.\\
\end{tabular}

\medskip
Now, given a guard $g$, we derive an \mitlfpinf\/ formula $gsat(\pi,g)$ using Table \ref{tab:guard1}. Abbreviate $\mathit{Enable}(pref(\pi,x))$ as $f(\pi,x)$\\
\begin{table}
\begin{center}
\begin{tabular}{|l|l|}
\hline
 $g$  &       $gsat(\pi,g)$\\
\hline
\hline
$0~\leq~T-x~<~c$ & $\past_{(0,\infty)}(f(\pi,x)) ~ \land~ \neg \past_{[c,\infty)}(f(\pi,x))$\\  
\hline
$0~\leq~T-x~\leq~ c$ & $\past_{(0,\infty)}(f(\pi,x)) ~ \land~ \neg \past_{(c,\infty)}(f(\pi,x))$\\
\hline
$T-x ~>~c$ & $\past_{(c,\infty)}(f(\pi,x))$\\
\hline
$T-x ~\geq~c$ & $\past_{[c,\infty)}(f(\pi,x))$\\
\hline
$T-x ~=~c$ & $\past_{[c,\infty)}(f(\pi,x)) ~\land~ \neg \past_{(c,\infty)}(f(\pi,x))$\\
\hline
$g_1~\land ~g_2$ & $gsat(\pi,g_1) ~ \land ~ gsat(\pi,g_2)$\\
\hline
$0~\leq~x-T~<~c$ & $\fut_{(0,\infty)}(f(\pi,x)) ~ \land~ \neg \fut_{[c,\infty)}(f(x))$\\
\hline  
\end{tabular}
\caption{Construction of $gsat(\pi,g)$}
\label{tab:guard1}
\end{center}
\end{table}
\begin{proposition}
\label{prop:guard}
Given any timed word $\rho$ such that there is a partial run $\pi$ of $\autm$ on $\rho$ and $\nu_\pi$ is the clock valuation of at the end of $\pi$ then $\forall p\in dom(\rho') ~.~ \rho',p\models gsat(\pi,g)$ if and only if $\nu_\pi,\tau_p\models g$. 
\end{proposition}
The proof of this proposition is directly apparent from the inductive hypothesis and the semantics of the automata. 

We may now derive $\mathit{Enable}(\pi:e_i)$ as follows. Let $e_i=(q,a_i,g_i,X_i,q_i)$.\\
\begin{itemize}
\item If $q \in Q_L$ then\\
$\mathit{Enable(\pi:e_i)} ~=~ [a_i \land gsat(\pi,g_i)]~ \land ~ [\past\mathit{Enable(\pi)}] ~\land ~$ \\
\tab \tab $[\neg\bigvee\limits_{(a,g)\in trans(q)} \past(a\land gsat(\pi,g) \land \past(\mathit{Enable}(\pi)))]$
\item If $q \in Q_R$ then\\
$\mathit{Enable(\pi:e_i)} ~=~ [a_i \land gsat(\pi,g_i)]~ \land ~ [\fut\mathit{Enable(\pi)}] ~\land ~ $\\
\tab \tab $[\neg\bigvee\limits_{(a,g)\in trans(q)} \fut(a\land gsat(\pi,g) \land \fut(\mathit{Enable}(\pi)))]$
\end{itemize}
The correctness of the above formulas may be verified by closely observing the construction. Consider the three conjuncts of the formula $\mathit{Enable}(\pi:e_i)$ in either of the above cases. The first ensures that the current position (at which the formula is evaluated) has the letter $a_i$ and satisfies the guard $g_i$ (see Proposition \ref{prop:guard}). The second conjunct ensures that the current position is to the right of (or to the left of) the position at which the partial run $\pi$ terminates (depending on whether $q \in Q_L$ or $Q_R$, respectively). The third conjunct ensures that if $p$ is the current position and $p'$ is the position at which $\pi$ terminates, then for all positions $p''$ strictly between $p$ and $p'$, none of the edges in $trans(q)$ may be enabled. Note that this is the requirement for the automaton to loop in state $q_i$ for all positions $p''$.
\end{proof}

\begin{figure}
\begin{center}
\begin{tikzpicture}
[
        scale=0.7, 
        transform shape, 
        auto,
        node distance=4cm,
        semithick,
        initial text=]
      \tikzstyle{every state}=[thick]
      
\draw (1,3) node(S) [circle,draw] {$s$};
\draw (5,3) node(Q) [circle,draw] {$q$};
\draw (7,5) node(Q1) [circle,draw] {$q_1$};
\draw (7,3) node(Qi) [circle,draw] {$q_i$};
\draw (7,1) node(Qn) [circle,draw] {$q_n$};

\draw [dotted,->] (S)-- node{$\pi$} (Q);
\draw [->] (Q)--(Q1);
\draw [->] (Q)--(Qi);
\draw [->] (Q)--(Qn);

\draw (6.4,2.8) node{$e_i$};
\draw (6,3.1) node {$a_i,g_i$};

\end{tikzpicture}
\caption{From \potdta\/ to \mitlfpinf}
\label{fig:autm2form}
\end{center}
\end{figure}

The formula $\phi_\autm$ may now be given by:\\
$\phi_\autm ~ =~ \bigvee\limits_{\wp}[\mathit{Enable}(\wp) ~ \lor~ \fut_{(0,\infty)}(\mathit{Enable}(\wp))]$\\
where $\wp$ is any path of progress edges in $\autm$ from $s$ to $t$.\\

\section{Embedding \mitlfpb\/ into \potdta}
We show a language-preserving conversion of an \mitlfpb\/ formula to a language-equivalent \potdta. 

Consider an \mitlfpb\/ formula $\Phi$ in the normal form.
We can inductively relate the truth of a subformula $\psi= \fut_{\langle l,l+1\rangle} \phi$ or
$\past_{\langle l,l+1\rangle} \phi$ within a unit interval $[r,r+1)$ to the values $\frst^\phi_I(\rho)$ 
and $\lst^\phi_I(\rho)$ of its sub-formula $\phi$ for suitable unit-length intervals 
$I$, by the following lemma\footnote{ We
shall use convention $\langle_a l,u \rangle_b$ to denote generic interval which can be open, closed or half open. Moreover, we use subscripts $a,b$ fix the openness/closeness and give generic conditions such as $\langle_a 2,3 \rangle_b + 2 = \langle_a 4,5 \rangle_b$. This instantiates to $(2,3) + 2 = (4,5)$ and $(2,3] + 2 = (4,5]$ and $[2,3) + 2 = [4,5)$ and $[2,3] + 2 = [4,5]$. Interval $[r,r)$ is empty. The same variables may also be used for inequalities. Hence, if $\langle_a$ or $\rangle_a$ is open, then $<_a$ and $>_a$ denote the strict inequalities $<$ and $>$ respectively. If $\langle_a$ or $\rangle_a$ is open, then $<_a$ and $>_a$ denote the non-strict inequalities $\leq$ and $\geq$ respectively.}. 
\begin{lemma} 
\label{lem:embedtwo}
Given a timed word $\rho$ and integers $r,l$ and $i \in dom(\rho)$ we have:
\begin{itemize}
\item $\rho,i\models \fut_{ \langle_a l, l+1 \rangle_b} \phi$ with $\tau_i \in [r,r+1)$ iff
\begin{itemize}
\item (1a) $\tau_i<\lst^\phi_{[r+l,r+l+1)}(\rho) ~\land ~ \tau_i \in [ r, (\lst^\phi_{[r+l,r+l+1)}(\rho)-l) \rangle_a$ OR 
\item (1b) $\tau_i<\lst^\phi_{[r+l+1,r+l+2)}(\rho) ~\land ~ \tau_i \in \langle_b(\frst^\phi_{[r+l+1,r+l+2)}(\rho)-(l+1)),(r+1))$ \\
\end{itemize}
\item $\rho,i\models \past_{ \langle_a l, l+1 \rangle_b} \phi$ with $\tau_i \in [r,r+1)$ iff
\begin{itemize} 
\item (2a) $\tau_i>\frst^\phi_{[r-l-1,r-l)}(\rho) ~\land ~ \tau_i \in ~ [ r,(\lst^\phi_{[r-l-1,r-l)}(\rho)+l+1) \rangle_b$ OR
\item (2b) $\tau_i>\frst^\phi_{[r-l,r-l+1)}(\rho) ~\land~ \tau_i\in ~ \langle_a (\mathcal F^\phi_{[r-l,r-l+1)}(\rho)+l),(r+1) )$ 
\end{itemize}
\end{itemize}
\end{lemma}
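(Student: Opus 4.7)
My plan is to prove both directions of each biconditional by unfolding the pointwise semantics of the modality and then case-analysing on which unit sub-interval contains the witness time $\tau_j$. I will give the argument for the future case $\fut_{\langle_a l,l+1\rangle_b}\phi$ in detail; the past case is entirely symmetric by reflecting the time ordering.

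For the forward direction, assume $\rho,i\models \fut_{\langle_a l,l+1\rangle_b}\phi$ with $\tau_i\in[r,r+1)$. By the semantics, there is $j>i$ with $\rho,j\models\phi$ and $\tau_j-\tau_i\in\langle_a l,l+1\rangle_b$. Hence $\tau_j\in[r+l,r+l+2)$, so $\tau_j$ belongs to exactly one of the unit intervals $[r+l,r+l+1)$ or $[r+l+1,r+l+2)$. If $\tau_j\in[r+l,r+l+1)$, the upper bound $\tau_j-\tau_i<l+1$ is automatic (from $\tau_j<r+l+1$ and $\tau_i\geq r$), so only the lower bound $\tau_j-\tau_i>_a l$ is binding; replacing $\tau_j$ by $\lst^\phi_{[r+l,r+l+1)}(\rho)$ only increases it, preserving the binding inequality, which gives (1a). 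Symmetrically, if $\tau_j\in[r+l+1,r+l+2)$, the lower bound is automatic (since $\tau_j-\tau_i>(r+l+1)-(r+1)=l$), so only the upper bound $\tau_j-\tau_i<_b l+1$ is binding; replacing $\tau_j$ by $\frst^\phi_{[r+l+1,r+l+2)}(\rho)$ only decreases it, preserving the upper bound, yielding (1b).

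For the converse, given (1a) I pick $j:=\max \idx^\phi_{[r+l,r+l+1)}(\rho)$; the conjunct $\tau_i<\lst^\phi_{[r+l,r+l+1)}(\rho)$ forces non-emptiness, because under the convention $\max(\emptyset)=1$ and $\tau_1=0$ an empty index set would give $\lst=0\leq\tau_i$, a contradiction. The second conjunct then delivers $\tau_j-\tau_i>_a l$, while $\tau_j<r+l+1$ and $\tau_i\geq r$ yield $\tau_j-\tau_i<l+1$, so $\tau_j-\tau_i\in\langle_a l,l+1\rangle_b$; strict monotonicity together with $\tau_i<\tau_j$ gives $i<j$. Case (1b) is analogous with $j:=\min \idx^\phi_{[r+l+1,r+l+2)}(\rho)$, the $\lst$-conjunct again enforcing non-emptiness and the position bounds $\tau_j\in[r+l+1,r+l+2)$, $\tau_i\in[r,r+1)$ supplying the unused side of the interval constraint automatically.

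The main obstacle is the bookkeeping of the open/closed endpoint markers $\langle_a,\rangle_b$: every step must thread these through the algebra cleanly, using the fact that a strict inequality subsumes its non-strict weakening when the matching endpoint is at stake. A second subtlety is the edge case $l=0$, where $[r+l,r+l+1)$ coincides with $[r,r+1)$ and overlaps $\tau_i$ itself; here $i<j$ in case (1a) cannot be read off from interval disjointness and must instead come from strict monotonicity combined with the inequality $\tau_i<\lst^\phi_{[r,r+1)}(\rho)$.
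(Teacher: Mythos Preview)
Your proof is correct and uses the same underlying ideas as the paper: for the converse you exhibit the extremal $\phi$-position in the relevant unit interval as witness, and for the forward direction you locate the witness $\tau_j$ in one of the two adjacent unit intervals and push it to $\lst$ or $\frst$ as appropriate. The only structural difference is in the forward direction: you case-split directly on whether $\tau_j\in[r+l,r+l+1)$ or $\tau_j\in[r+l+1,r+l+2)$, whereas the paper instead assumes $\rho,i\models\psi$ together with the failure of (1b) (first splitting on which conjunct of (1b) fails) and in each case argues that the witness must lie in $[r+l,r+l+1)$, hence (1a). Your decomposition is the more natural and symmetric one and avoids the detour through the negation of (1b); the paper's organisation, by contrast, leans on Figure~\ref{fig:bmitlpo2fig1} to make the interval arithmetic visible. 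Your explicit handling of the $l=0$ edge case (where interval disjointness no longer supplies $i<j$ for free and one must fall back on $\tau_i<\lst$ plus strict monotonicity) and of the $\langle_a,\rangle_b$ bookkeeping is a useful addition that the paper leaves implicit.
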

\begin{proof}
This lemma may be verified using the figure \ref{fig:bmitlpo2fig1}. We consider the case of $\fut_{\langle_a l, l+1 \rangle_b} \phi$ omitting the symmetric case of $\past_{\langle_a l, l+1 \rangle_b} \phi$. Let $\psi=\fut_{ \langle_a l, l+1 \rangle_b} \phi$.
Fix a timed word $\rho$.\\
\textbf{Case 1}:  (1a) holds. (We must show that $\rho,i \models \psi$ and $\tau_i\in [r,r+1)$). Since conjunct 1 holds, clearly $\idx^\phi_{[r+l,r+1+1)} \not=\emptyset$ and it has max element $j$ such that $\tau_j=\lst^\phi_{[r+l,r+l+1)}(\rho)$ and $\rho,j \models \phi$ and $i<j$. Also, by second conjunct of (1a) $\tau_i \in [r,\tau_j -l \rangle_a$. Hence, by examination of Figure \ref{fig:bmitlpo2fig1}, $\tau_i \in \tau_j - \langle_a l , l+1 \rangle_b$  and hence $\rho,i \models \psi$. \\
\textbf{Case 2}:  (1b) holds. (We must show that $\rho,i \models \psi$ and $\tau_i\in [r,r+1)$). Since conjunct 1 holds, clearly $\idx^\phi_{[r+l+1,r+1+2)} \not=\emptyset$ and it has min element $j$ such that $\tau_j=\frst^\phi_{[r+l+1,r+l+2)}(\rho)$ and $\rho,j \models \phi$. From Figure \ref{fig:bmitlpo2fig1}, $j$ is a witness such that for all $k$ such that $(\tau_j-(l+1)) ~<_b~\tau_k~<_a~(\tau_j-l)$, we have $\tau_k\models\phi$. Therefore, by second conjunct of (1b) $\tau_i \in \langle_b \tau_j -(l+1), (r+1))$, we may infer that $\tau_i \in \tau_j - \langle_a l , l+1 \rangle_b$. Hence $\rho,i \models \psi$ and $\tau_i\in [r,r+1)$.\\
\textbf{Case 3}: $\rho,i \models \psi$ and $\tau_i\in [r,r+1)$ and  first conjunct of (1b) does not hold. (We must show that (1a) holds.) Since $\tau_i \not < \lst^\phi_{[r+l+1,r+l+2)}(\rho)$ we have $\idx^\phi_{[r+l+1,r+l+2)}(\rho)=\emptyset$. \\
Since $\rho,i \models \psi$ for some $\tau_i \in [r,r+1)$, there is some $\tau_j>\tau_i$ s.t. $\rho,j \models \phi$ and $r+l \leq \tau_j < r+l+1$ as well as $\tau_j \in \tau_i + \langle_a l,l+1 \rangle_b$, and 
$\tau_j \leq \lst^\phi_{[l+r,l+r+1)}(\rho)$. Hence, we have $\tau_i \in \lst^\phi_{[l+r,l+r+1)}(\rho) - \langle_a l, l+1 \rangle_b$ which from Figure
\ref{fig:bmitlpo2fig1} gives us that $\tau_i \in [r,\lst^\phi_{[l+r,l+r+1)}(\rho) -l \rangle_a$. Also, we can see that $\tau_i < \lst^\phi_{[l+r,l+r+1)}(\rho)$. Hence (1a) holds.\\
\textbf{Case 4}: $\rho,i \models \psi$ and $\tau_i\in [r,r+1)$ and first conjunct of (1b) holds but the second conjunct of (1b) does not hold. (We must show that (1a) holds.)
As $\rho,i \models \psi$ for some $\tau_i \in [r,r+1)$, there is some $\tau_j>\tau_i$ s.t. $\tau_j \in \tau_i + \langle_a l,l+1 \rangle_b$ and $\rho,j \models \phi$. \\
Since $\tau_i <  \lst^\phi_{[r+l+1,r+l+2)}(\rho)$ we have
$\idx^\phi_{[r+l+1,r+l+2)}(\rho) \not= \emptyset$. However, second conjunct of
(1b) does not hold. Hence, $\tau_i \not >_b \frst^\phi_{[r+l+1,r+l+2)}(\rho) - (l+1)$ and $j\in [r+l,r+l+1)$.
By examination of Figure \ref{fig:bmitlpo2fig1}, we conclude that 
$\idx^\phi_{[r+l,r+1+1)} \not=\emptyset$ and $\tau_j \leq \lst^\phi_{[r+l,r+l+1)}(\rho)$. Hence, $\tau_j-l \leq \lst^\phi_{[r+l,r+l+1)}(\rho)-l$.
 This gives us that
$\tau_i \in [r,\tau_j -l\rangle_a$ (see Figure \ref{fig:bmitlpo2fig1}).
Thus, (1a) holds.
\end{proof}

\begin{figure}
\begin{tikzpicture}[scale=0.74, transform shape]
\draw[dotted](-1,5) -- (0,5);
\draw(0,5) node{I} -- (4,5) node{I}; \draw[dotted](4,5)-- (7,5) node{I}; \draw (7,5) -- (11,5) node{I} -- (15,5) node{I};
\draw(0,4.7) node{$r$}; \draw(4,4.7) node{$r+1$}; \draw(7,4.7) node{$r+l$}; \draw(11,4.7) node{$r+l+1$}; \draw(15,4.7) node{$r+l+2$};
\draw[dotted](1.5,5) --(1.5,6) node{$(y-l)$}; \draw[dotted](2.5,5) --(2.5,5.5) node{$x-(l+1)$}; \draw[dotted](8.5,5) --(8.5,6) node{$y$}; \draw[dotted](13.5,5) --(13.5,5.5) node{$x$};
\draw[dotted](1.5,5) --(1.5,4); \draw[dotted](2.5,5) --(2.5,4); \draw[dotted](-1,5) --(-1,4);\draw[dotted](5.5,5) --(5.5,4); \draw[dotted](8.5,5) --(8.5,4); \draw (8.4,3.7) node{$\mathcal L^\phi_{r+l}$}; \draw[dotted](13.5,5) --(13.5,4); \draw (13.5,3.7) node{$\mathcal F^\phi_{r+l+1}$};
\draw(0.2,4) node{$\psi$}; \draw(4.3,4) node{$\psi$}; 
\draw[<->] (1.5,4) -- (2.5,4);\draw (-1,4.3) -- (1.5,4.3); \draw (2.5,4.3) -- (5.5,4.3); \draw[(-)] (8.5,4.2) -- (13.5,4.2);
\draw (-0.9,4.3) node{$\langle_b$};
\draw (1.6,4.3) node{$\rangle_a$};
\draw (2.6,4.3) node{$\langle_b$};
\draw (5.6,4.3) node{$\rangle_a$};
\draw (2,3.7) node{$\neg \psi$};
\draw (11,4) node{$\neg \phi$};
\end{tikzpicture}
\caption{Case of $\psi := \fut_{\langle_a l,l+1 \rangle_b}\phi$}
\label{fig:bmitlpo2fig1}
\end{figure}

\begin{figure}
\begin{tikzpicture}[scale=0.8, transform shape]
\draw(0,5) node{I} -- (4,5) node{I} -- (8,5) node{I};  \draw[dotted](8,5)-- (11,5)node{I};  \draw (11,5) -- (15,5) node{I}; \draw[dotted](15,5)-- (16,5) node{};
\draw(0,4.7) node{$r-l-1$}; \draw(4,4.7) node{$r-l$}; \draw(8,4.7) node{$r-l+1$}; \draw(11,4.7) node{$r$}; \draw(15,4.7) node{$r+1$};
\draw[dotted](12.5,5) --(12.5,6) node{$(y+l+1)$}; \draw[dotted](13.5,5) --(13.5,5.5) node{$x+l$}; \draw[dotted](1.5,5) --(1.5,6) node{$y$}; \draw[dotted](6.5,5) --(6.5,5.5) node{$x$};
\draw[dotted](12.5,5) --(12.5,4); \draw[dotted](13.5,5) --(13.5,4); \draw[dotted](9.5,5) --(9.5,4);\draw[dotted](16,5) --(16,4); \draw[dotted](1.5,5) --(1.5,4) node{$\mathcal L^\phi_{r-l-1}$}; \draw[dotted](6.5,5) --(6.5,4) node{$\mathcal F^\phi_{r-l}$};
\draw(11.2,4) node{$\psi$}; \draw(15.3,4) node{$\psi$}; 
\draw[<->] (12.5,4) -- (13.5,4);\draw (9.5,4.3) -- (12.5,4.3); \draw (13.5,4.3) -- (16,4.3); \draw[<->] (1.7,4.2) -- (6.3,4.2);
\draw (9.6,4.3) node{$\langle_a$};
\draw (12.6,4.3) node{$\rangle_b$};
\draw (13.6,4.3) node{$\langle_a$};
\draw (16.1,4.3) node{$\rangle_b$};
\draw (13,3.7) node{$\neg \psi$};
\draw (4,4) node{$\neg \phi$};
\end{tikzpicture}
\caption{Case of $\psi :=\past_{\langle_a l,l+1 \rangle_b}\phi$}
\label{fig:bmitlpo2fig2}
\end{figure}

From Lemma \ref{lem:embedtwo}, we can see that in order to determine the truth of a formula of the form $\psi = \fut_{\langle l,l+1 \rangle} \phi$ at any time stamp in $[r,r+1)$, it is sufficient to clock the first and last occurrences of $\phi$ in the intervals $[r+l, r+l+1)$ and $[r+l+1,r+l+2)$. Similarly, in order to determine the truth of a formula of the form $\psi = \past_{\langle l,l+1 \rangle} \phi$ at any time stamp in $[r,r+1)$, it is sufficient to clock the first and last occurrences of $\phi$ in the intervals $[r-l, r-l+1)$ and $[r-l-1,r-l)$.

\usetikzlibrary{automata}%
\usetikzlibrary{positioning}
\begin{figure}
\begin{tikzpicture}[
        scale=0.9, 
        transform shape, 
        auto,
        node distance=3.5cm,
        semithick,
        initial text=]
      \tikzstyle{every state}=[thick]
      
\node[state,initial](S){$\leftarrow$};
\node[state](A)[right of=S]{$\rightarrow$};
\node[state](B2)[right of=A]{$\rightarrow$};
\node[state](C2)[right of=B2]{$\leftarrow$};
\node[state](T)[right of=C2]{$t$};

\path[->]
(S) edge node[above] {$\lend$} (A)
(A) edge[out=100, in=100, looseness=0.5] node {$\rend$} (T)

(A) edge[bend right] node[above] {$x^\phi_{[r,r+1)} :=T$} (B2)
(A) edge[bend right] node[swap] {$a_j,~\mathcal G(\phi,[r,r+1),a_j)$} (B2)

(B2) edge node[above] {$\rend$} (C2)

(B2) edge node[above] {$\rend$} (C2)
(C2) edge[bend right] node[above]  {$y^\phi_{[r,r+1)}:=T$} (T)
(C2) edge[bend right] node[swap]{$a_j, ~\mathcal G(\phi,[r,r+1),a_j)$} (T)
;
\end{tikzpicture}
\caption{ Automaton $\autm(\phi, [r,r+1))$}
\label{bmitlfp:autm}
\end{figure}

The automaton $\autm(\Phi)$ is constructed in an inductive, bottom-up manner as follows.
For every modarg $\phi$ of $\Phi$, we first inductively evaluate the set of unit intervals within which its truth must be evaluated. Each such requirement is denoted by a tuple $(\phi,[r,r+1))$. This is formalized as a closure set of a subformula with respect to an interval.  For an interval $I$, let $spl(I)$ denote a partition set of $I$, into unit length intervals. For example, if $I= (3,6]$ then $spl(I) = \{(3,4),[4,5),[5,6]\}$. The closure set may be built using the following rules.
\begin{itemize}
 \item $Cl(\phi,[r,r+1)) = \{(\phi,[r,r+1))\} \cup_j Cl(\psi_j,[r,r+1))$,\\
where $\{\psi_j\}$ is the set of immediate modal subformulas of $\phi$.
 \item $Cl(\fut_I \phi, [r,r+1)) = \cup_{\langle l,l+1\rangle\in spl(I)}Cl(\fut_{\langle l,l+1\rangle},[r,r+1))$
\item $Cl(\past_I \phi, [r,r+1)) = \cup_{\langle l,l+1\rangle\in spl(I)}Cl(\past_{\langle l,l+1\rangle},[r,r+1))$
 \item $Cl(\fut_{\langle l,l+1\rangle} \phi, [r,r+1)) = Cl(\phi,[r+l,r+l+1)) \cup Cl(\phi,[r+l+1,r+l+2))$
 \item $Cl(\past_{\langle l,l+1\rangle} \phi, [r,r+1)) = Cl(\phi,[r-l-1,r-l)) \cup Cl(\phi,[r-l,r-l+1))$
\end{itemize}
Define strict closure $SCl(\phi,[r,r+1)) = Cl(\phi,[r,r+1)) \setminus \{(\phi,[r,r+1))\}$. The following lemma states the key property of $\autm(\phi, [r,r+1))$.

\begin{table}
\begin{center}
\begin{tabular}{|l|l|}
\hline
&\\
[-1.5ex]
$\psi$ & $cond(\psi,[r,r+1))$ \\ 
[1ex]
\hline
&\\
[-1.5ex]
$\fut_{\langle_a l,l+1 \rangle_b} \delta$ & $T<y^\delta_{[r+l,r+l+1)} ~\land ~T \in [ r,(y^\delta_{[r+l,r+l+1)}-l)\rangle_b ~~\lor$ \\
                       &$T<y^\delta_{[r+l+1,r+l+2)} ~\land ~ T \in 
                       \langle_a (x^\delta_{[r+l+1,r+l+2)}-(l+1)),(r+1))$ \\ 
[1ex]
\hline 
&\\
[-1.5ex]
$\past_{\langle_a l,l+1 \rangle_b} \delta$ & $T>x^\delta_{[r-l-1,r-l)} ~\land ~ T \in ~ [ r,(y^\delta_{[r-l-1,r-l)}+l+1) \rangle_a ~~\lor$ \\
                       & $T>x^\delta_{[r-l,r-l+1)} ~\land~ T\in ~ 
                        \langle_b (x^\delta_{[r-l,r-l+1)}+l), (r+1))$ \\ 
[1ex]
\hline 
&\\
[-1.5ex]
$\fut_I(\delta)$ & $ \bigvee_{\langle l,l+1\rangle\in spl(I)}(cond(\fut_{\langle l,l+1\rangle}\delta,[r,r+1)))$ \\
[1ex]
\hline 
&\\
[-1.5ex]
$\past_I(\delta)$ & $ \bigvee_{\langle l,l+1\rangle\in spl(I)}(cond(\past_{\langle l,l+1\rangle}\delta,[r,r+1)))$ \\
[1ex]
\hline 
\end{tabular}
\caption{Construction of $cond(\psi,[r,r+1))$}
\label{tab:guard}
\end{center}
\end{table}

\begin{lemma}
\label{lem:embedtwob}
For any modarg $\phi$ in normal form, timed word $\rho$ and integer $r$, we construct an automaton $\autm(\phi,[r,r+1))$ such that, if the initial clock valuation is $\nu_0$, with 
$\nu_0(x^\delta_I)=\frst^\delta_I(\rho)$ and $\nu_0(y^\delta_I)=\lst^\delta_I(\rho)$ for all $(\delta,I)\in SCl(\phi,[r,r+1))$ and $\nu_0(x^\phi_{[r,r+1)})=\#\rho$ and $\nu_0(y^\phi_{[r,r+1)})=0$, then 
the automaton $\autm(\phi,[r,r+1))$ will accept $\rho$ and terminate with a valuation $\nu$ such that
\begin{itemize}
 \item $\nu(x^{\phi}_{[r,r+1)}) =\frst^{\phi}_{[r,r+1)}(\rho)$
 \item $\nu(y^{\phi}_{[r,r+1)}) =\lst^{\phi}_{[r,r+1)}(\rho)$  and
\item $\nu(c) = \nu_0(c)$, for all other clocks.
\end{itemize}
\end{lemma}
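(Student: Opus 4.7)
The plan is to analyse the unique deterministic run of $\autm(\phi,[r,r+1))$ on $\rho$ starting from $\nu_0$, and show that its trace naturally records the required first and last timestamps. The argument breaks into two parts: (i) verifying that the guards on the progress edges correctly encode ``$\rho,i\models\phi$ and $\tau_i\in[r,r+1)$'', and (ii) tracing the two-scan control flow of the automaton to see that it extracts the minimum and maximum matching position in $\rho$.

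For part (i), fix a letter $a_j\in\Sigma$. Since $\phi$ is in normal form, $\phi=\bigvee_{a\in\Sigma}(a\land\mathscr B_a(\{\psi_i\}))$ where each $\psi_i$ is an immediate modal subformula. The guard $\mathcal G(\phi,[r,r+1),a_j)$ is taken to be $T\in[r,r+1)\,\land\,\mathscr B_{a_j}(cond(\psi_i,[r,r+1)))$ with $cond$ defined by Table~\ref{tab:guard}. The critical observation is that for every $\psi_i$ of the form $\fut_{\langle l,l+1\rangle}\delta$ or $\past_{\langle l,l+1\rangle}\delta$, Lemma~\ref{lem:embedtwo} gives an exact characterization of $\rho,i\models\psi_i$ (for $\tau_i\in[r,r+1)$) as a simple condition on $\frst^\delta_I(\rho)$ and $\lst^\delta_I(\rho)$ with $I$ ranging over two related unit intervals --- and these are precisely the clocks $x^\delta_I,y^\delta_I$ appearing in $cond$. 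By the hypothesis on $\nu_0$, every $(\delta,I)\in SCl(\phi,[r,r+1))$ has its corresponding clocks pre-loaded with the correct values, and the disjunctive decomposition in the last two lines of Table~\ref{tab:guard} handles $\psi_i$ with larger intervals via $spl(I)$. A direct structural comparison with Lemma~\ref{lem:embedtwo} then gives $\nu_0,\tau_i\models cond(\psi_i,[r,r+1))$ iff $\rho,i\models\psi_i$ for every $\tau_i\in[r,r+1)$, whence $\mathcal G(\phi,[r,r+1),a_j)$ fires at position $i$ iff $\rho,i\models a_j\land\phi$ and $\tau_i\in[r,r+1)$.

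For part (ii), trace the run from $(S,\nu_0,1)$. After reading $\lend$ and transitioning to $A$, the automaton self-loops rightward, checking each position $i$: by part (i) the progress transition out of $A$ fires exactly at the smallest $i\in\idx^\phi_{[r,r+1)}(\rho)$, whose timestamp is $\frst^\phi_{[r,r+1)}(\rho)$, and the reset assigns this value to $x^\phi_{[r,r+1)}$. The automaton then scans rightward in $B2$ to $\rend$, enters $C2$, and self-loops leftward; the first progress transition fired now corresponds to the largest $i\in\idx^\phi_{[r,r+1)}(\rho)$, so the reset stores $\lst^\phi_{[r,r+1)}(\rho)$ in $y^\phi_{[r,r+1)}$ before entering $t$. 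If $\idx^\phi_{[r,r+1)}(\rho)=\emptyset$ then $A$ reaches $\rend$ without matching and transitions directly to $t$; no resets occur, and by the conventions $min(\emptyset)=\#\rho$, $max(\emptyset)=1$ the pre-loaded defaults already equal $\frst^\phi_{[r,r+1)}(\rho)$ and $\lst^\phi_{[r,r+1)}(\rho)$. Since the only clocks ever reset along the run are $x^\phi_{[r,r+1)}$ and $y^\phi_{[r,r+1)}$, all other clocks keep their initial values.

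I expect the main obstacle to lie in part (i): one has to be careful with the openness/closedness subscripts $\langle_a,\rangle_b$, with the correct association between the clocks $x^\delta_I,y^\delta_I$ in $cond$ and the quantities $\frst^\delta_I(\rho),\lst^\delta_I(\rho)$ in Lemma~\ref{lem:embedtwo}, and with the disjunctive lift from $\langle l,l+1\rangle$ pieces to general intervals via $spl(I)$. Part (ii) is then routine because the automaton is deterministic with only self-loops outside the single spine of progress edges, provided one double-checks the empty-witness corner case so that the default valuation indeed satisfies the required output.
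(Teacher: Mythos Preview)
Your proposal is correct and follows essentially the same two-part structure as the paper's proof sketch: first establish that $\mathcal G(\phi,[r,r+1),a)$ fires exactly at the $\phi$-positions in $[r,r+1)$ by invoking Lemma~\ref{lem:embedtwo} and Table~\ref{tab:guard}, then trace the forward-then-backward scan of the automaton in Figure~\ref{bmitlfp:autm} to read off the first and last such positions. Your version is in fact more explicit than the paper's on two points worth flagging: you add the conjunct $T\in[r,r+1)$ to the guard (the paper's definition $\mathcal G(\phi,[r,r+1),a)=\mathscr B_a(cond(\psi_i,[r,r+1)))$ omits it, and the paper only asserts the equivalence \emph{under the hypothesis} $\tau_i\in[r,r+1)$, so your addition is the right fix for modargs with no modal subformulas), and you treat the empty-index-set corner case explicitly, which the paper's sketch leaves implicit.
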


\begin{proof}[Proof sketch]
The automaton $\autm(\phi,[r,r+1))$ is given in Figure \ref{bmitlfp:autm}. For each $(\delta,I) \in SCl(\phi,[r,r+1))$, $\autm(\phi,[r,r+1)$ uses the  clock values of $x^{\delta}_{I}$ and $y^{\delta}_{I}$ in its guards, and it resets the clocks $x^\phi_{[r,r+1)}$ and $y^\phi_{[r,r+1)}$. 

For every $\psi$, which is an immediate modal subformula of $\phi$, we derive $cond(\psi,[r,r+1))$ as given in Table \ref{tab:guard}. The first two rows in Table \ref{tab:guard} are directly adapted from Lemma \ref{lem:embedtwo}. The last two rows, may be easily inferred from the semantics of \mitlfpb. Hence, we may infer that $\forall i\in dom(\rho)$ if $\tau_i\in [r,r+1)$ then $\nu_0,\tau_i\models cond(\psi,r)$ iff $\rho,i\models\psi$. Now, if $\phi = \bigvee\limits_{a\in\Sigma} (a \land \mathscr B_a(\psi_i))$, then the guard on the transitions labelled by $a$, which reset $x^\phi_{[r,r+1)}$ and $y^\phi_{[r,r+1)}$ (as in figure \ref{bmitlfp:autm}) is given by $\mathcal G(\phi,[r,r+1),a) = \mathscr B_a(cond(\psi_i,[r,r+1)))$. It is straightforward to see that $\forall i\in dom(\rho)$ if $\tau_i\in [r,r+1)$ and $\sigma_i=a$ then $\nu_0,\tau_i\models\mathcal G(\phi,[r,r+1),a)$ iff $\rho,i\models\phi$. By observing the \potdta\/ in figure  \ref{bmitlfp:autm},  we can infer that it clocks the first and last $\phi$-positions in the unit interval $[r,r+1)$, and respectively assigns it to $x^\phi_{[r,r+1)}$ and $y^\phi_{[r,r+1)}$. 
\end{proof}

\begin{theorem}
Given any \mitlfpb\/ formula $\Phi$, we may construct a \potdta\/ which is language-equivalent to $\Phi$. Satisfiability of \mitlfpb\/ formulas is decidable in \nexp-time.
\end{theorem}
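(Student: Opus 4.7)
The plan is to build $\autm(\Phi)$ by a bottom-up composition of the component automata supplied by Lemma~\ref{lem:embedtwob}, and then invoke the NP-complete emptiness check for \potdta\/ to obtain the complexity bound.

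First I would pass $\Phi$ through Proposition~\ref{prop:norm} to obtain an equivalent normal form formula of linearly related modal-DAG size. Because we must check $\rho,1\models\Phi$ with $\tau_1=0$, it suffices to evaluate $\Phi$ at a position in $[0,1)$; accordingly, I would compute the closure set $Cl(\Phi,[0,1))$ using the recursive rules already given. This yields a finite family of pairs $(\phi,[r,r+1))$ for which the first and last occurrences of $\phi$ inside $[r,r+1)$ must be clocked into the variables $x^\phi_{[r,r+1)}$ and $y^\phi_{[r,r+1)}$.

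Next, I would order the pairs in $Cl(\Phi,[0,1))$ by a topological sort consistent with the subformula relation (innermost modarg first; at each depth, all unit intervals in $spl(I)$ before the ambient interval). An initial preamble automaton sets $x^\phi_{[r,r+1)}$ to $\tau_{\#\rho}$ and $y^\phi_{[r,r+1)}$ to $0$ for every pair, by a simple left-to-right and right-to-left scan, thereby establishing the initial-valuation hypothesis of Lemma~\ref{lem:embedtwob}. I would then concatenate the component automata $\autm(\phi,[r,r+1))$ in the chosen order; the partial-order property is preserved because each component is itself partially ordered, the clocks it reads are already finalized by earlier components, and the clocks it writes are not read in the past. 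A final component evaluates $\Phi$ at a position with $\tau\in[0,1)$, using the same $cond(\cdot,[0,1))$ guards from Table~\ref{tab:guard}, and transitions to the accept state $t$ iff the guard for $\Phi$ is satisfied at position $1$. Correctness is a straightforward induction on modal-DAG depth, using Lemma~\ref{lem:embedtwob} to discharge each step.

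The main obstacle is the size analysis, which drives the \nexp\ bound. Since the interval endpoints in $\Phi$ are written in binary, the largest constant $k$ in $\Phi$ can be $2^{O(|\Phi|)}$, and the number of distinct unit-interval indices $r$ appearing in $Cl(\Phi,[0,1))$ is $O(k)$ per modal subformula. Hence $|Cl(\Phi,[0,1))|$ and the number of clocks and states of $\autm(\Phi)$ are each $2^{O(|\Phi|)}$, while the largest constant in its guards is bounded by $k$. I would then invoke the result of \cite{PS10} that non-emptiness of a \potdta\/ is in \np\ in its size (via a polynomial small-model property); applied to $\autm(\Phi)$, this gives a nondeterministic procedure of time $2^{O(|\Phi|)}$ for satisfiability of $\Phi$, establishing the \nexp\ upper bound.
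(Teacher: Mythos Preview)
Your proposal is correct and follows essentially the same approach as the paper: normalize, compute $Cl(\Phi,[0,1))$, initialize the $x$-clocks to $\tau_{\#\rho}$, sequentially compose the component automata $\autm(\phi,[r,r+1))$ in bottom-up order, and finish with a check at position~$1$; the paper's $\autm_{check}$ simply tests $x^{\Phi}_{[0,1)}=0$ rather than re-evaluating the guard, but this is an equivalent variant. One small imprecision in your size analysis: the number of unit intervals attached to a deeply nested modarg is not merely $O(k)$ for the largest constant $k$, but the product of the interval lengths along the nesting path; nonetheless that product is bounded by $2^{O(|\Phi|)}$, so your final $\nexp$ bound stands.
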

\begin{proof}
Firstly, $\Phi$ is reduced to the normal form, as described in section \ref{sec:mitlsem}. The automaton is given by $\autm_\Phi = \autm_{reset};\autm_{induct};\autm_{check}$ \footnote{The operator ``;'' denotes sequential composition of \potdta}.  The \potdta\/ $\autm_{reset}$ makes a pass to the end of the word and resets all $x^\delta_I$ (for all $(\delta,I)\in Cl(\Phi,[0,1))$) to the value $\tau_{\#\rho}$. The bottom-up arrangement of $Cl(\Phi,[0,1))$ is the sequence of elements $(\delta,I)$ of the set in bottom-up order of the subformulas $\delta$ (in any arbitrary ordering of the intervals). $\autm_{induct}$ sequentially composes $\autm(\delta,I)$ in the bottom-up sequence of $Cl(\Phi,[0,1))$. This bottom-up sequence ensures that the conditions for the initial valuation of each of the component automata, as required in lemma \ref{lem:embedtwob}, are satisfied.
Finally, $\autm_{check}$ checks if the clock value $x^\Phi_{[0,1)}=0$, thereby checking the validity of $\Phi$ at the first position in the word. 

\noindent
\emph{Complexity}: Assuming DAG representation of the formula, reduction to normal form only gives a linear blow up in size of the DAG. Observe that the $Cl(\psi,[r,r+1))$  for $\psi=\fut_I (\phi)$ or  $\past_I(\phi)$ contains $m+1$ number of elements of the form $\phi,[k,k+1)$, where $m$ is the length of the interval $I$. Hence, if interval constraints are encoded in binary, it is easy to see that the size of $Cl(\Phi,[0,1))$ is $O(2^l)$, where $l$ is the modal DAG-size of $\Phi$. 
Since each  $\autm(\phi,[r,r+1))$ has a constant number of states, we may infer that the number of states in $\autm(\Phi)$ is $O(2^l)$. 
Since the non-emptiness of a \potdfa\/ may be decided with NP-complete complexity, we conclude that satisfiability of a \mitlfpb\/ formula is decidable with NEXPTIME complexity. 
\end{proof}

\section{Decision Complexities for \mitlfp\/ Fragments}
Figure \ref{fig:complexity} depicts the satisfaction complexities of various unary sublogics of \mitl\/ that are studied in this chapter.
We shall use tiling problems (\cite{vEB97},\cite{Fur83}) to derive lower bounds for satisfiability problems of the logics considered.
A tiling system $(X,M_H,M_V)$  consists of a finite set of tile types $X$ 
with $M_V,M_H \subseteq X \times X$.  Tiling of a rectangular region of size 
$p \times q$ is a map $T:\{1, \ldots, p\} \times \{1, \ldots, q\} \rightarrow X$ 
such that $(T(i,j),T(i+1,j)) \in M_H$ and $(T(i,j),T(i,j+1)) \in M_V$. These are called 
horizontal and vertical matching constraints. An instance of the tiling problem 
specifies the region to be tiled  with a given tiling system and additional 
constraints on tiling, if any (such as $T(1,1) = a \land T(p,q)=b$).

We reduce tiling problems to satisfiability of $\mitlfp$ formulae. Thus, a tiling $T$
is represented by a timed word $\rho_T$ over the alphabet $X \cup {s}$ such that the sequence of letters is just catenation of rows of $T$ separated by a fresh separator letter $s$. Hence, length of $\rho_T= p \times (q+1)$ and $s$ occurs at positions $i(p+1)$ with $1 \leq i \leq q$. Depending upon the logic in consideration, various schemes are selected for time stamping the 
letters of  $\rho_T$ so that horizontal and vertical matching constraints can be enforced. We shall use abbreviations $XX \df \lor_{a \in X} (a)$ and $XXS \df XX \lor s$ and $Atlast \df \neg\fut_{[0,\infty)}  XXS$ in the formulae.

\paragraph{\expspace-hard tiling problem} Given a problem instance consisting of a tiling system
$(X,M_H,M_V)$, a natural number $n$ and first and final tiles $f$ and $t$, the solution of the problem is a tiling $T$ of a rectangle of size $2^n\times m$ such that $T(1,1)=f$ and $T(2^n,m)=t$, for some natural number $m>0$. This tiling problem is known to be \expspace-hard in $n$.\\

\begin{theorem}\label{theo:expspacehard} 
Satisfiability of $\mitlf$ (and hence $\mitlfp$) is \expspace-hard.
\end{theorem}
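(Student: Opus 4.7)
The plan is to reduce the \expspace-hard $2^n \times m$ corridor tiling problem to the satisfiability of an $\mitlf$ formula whose modal-DAG size is polynomial in $n$. The strategy is the standard "tiling in a timed word" reduction, adapted to the purely future, non-punctual fragment, exploiting binary encoding of constants so that intervals with endpoint $2^n$ contribute only $O(n)$ to the size.

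First I would fix the encoding. A candidate tiling $T$ is represented by a strictly monotonic timed word over $X \cup \{s\}$ in which the tile at column $c$, row $r$ sits at integer time $(r-1)(2^n+1) + c$, and the separator closing row $r$ sits at time $r(2^n+1)$. Thus consecutive events lie exactly $1$ unit apart, and tiles in the same column of successive rows lie exactly $2^n+1$ units apart. Under this convention, the non-punctual open intervals $(0,2)$ and $(2^n, 2^n+2)$ each contain exactly one integer -- $1$ and $2^n+1$ respectively -- so $\fut_{(0,2)}$ at a tile picks out its immediate successor, and $\fut_{(2^n,2^n+2)}$ at a tile picks out the corresponding tile in the next row. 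This is the key to avoiding punctual constraints.

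Next I would assemble $\Phi = \Phi_{\text{init}} \wedge \Phi_{\text{grid}} \wedge \Phi_{\text{horiz}} \wedge \Phi_{\text{vert}} \wedge \Phi_{\text{final}}$. Here $\Phi_{\text{init}}$ asserts that the first event is $f$; $\Phi_{\text{grid}}$ locks the skeleton by saying that every tile satisfies $\fut_{(0,2)}(XXS) \wedge \neg\fut_{(0,2)}(XXS \wedge \fut_{(0,2)}(XXS))$ (one immediate successor, no second event within $2$ units) and every $s$ satisfies $\fut_{(2^n, 2^n+2)}(s) \vee \fut_{(0,2)}(Atlast)$, forcing row-ends to recur every $2^n+1$ units until the last row; $\Phi_{\text{horiz}}$ is $\bigwedge_{(a,b) \notin M_H} \neg \fut_{[0,\infty)}(a \wedge \neg \fut_{(0,2)}(s) \wedge \fut_{(0,2)}(b))$; $\Phi_{\text{vert}}$ is $\bigwedge_{(a,b) \notin M_V} \neg \fut_{[0,\infty)}(a \wedge \fut_{(2^n, 2^n+2)}(b))$; and $\Phi_{\text{final}}$ says $\fut_{[0,\infty)}(t \wedge \neg \fut_{(0,2)}(XX))$. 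A size count then gives $|\Phi| = \mathrm{poly}(n, |X|)$ because the only large constants are $2^n$ and $2^n+2$, both of $O(n)$-bit binary length, and there are $O(|X|^2)$ matching conjuncts. Correctness in the "tiling $\Rightarrow$ word" direction is immediate from the encoding, and in the converse direction one decodes any model of $\Phi$ row-by-row using the grid skeleton pinned down by $\Phi_{\text{grid}}$.

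The main obstacle is $\Phi_{\text{grid}}$: making the grid rigid enough that $\fut_{(0,2)}$ and $\fut_{(2^n, 2^n+2)}$ genuinely pick out unique positions, using only non-punctual future modalities. Two things must be avoided: (i) "stuffing'' extra events inside the open unit interval, which is ruled out by the "no two events within $(0,2)$'' clause above; and (ii) a row of the wrong length, which is ruled out by chaining the separator recurrence $\fut_{(2^n, 2^n+2)}(s)$ with the integer step structure, so that any $s$ forced to appear $2^n+1$ units later must coincide with the $(2^n{+}1)$-th subsequent event. Once this skeleton is established, the horizontal and vertical matching conjuncts are local properties of single or paired positions and the reduction goes through. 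Together with the \expspace\ upper bound already known for $\mitlus$ (and hence for $\mitlfp$ and $\mitlf$), this establishes \expspace-completeness for $\mitlf$.
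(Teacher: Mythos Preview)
Your approach is essentially the paper's: reduce the $2^n\times m$ tiling problem to $\mitlf$ satisfiability by encoding the tiling as a timed word with integer time-stamps and exploiting binary encoding so that the constant $2^n$ costs only $O(n)$ bits. The paper uses the half-open intervals $(0,1]$ and $(2^n,2^n{+}1]$ (and the allowed-pair rather than forbidden-pair formulation of matching), but those differences are cosmetic.

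There is, however, a concrete bug in your $\Phi_{\text{grid}}$. The conjunct $\neg\fut_{(0,2)}(XXS \wedge \fut_{(0,2)}(XXS))$ does \emph{not} say ``no second event within $2$ units of the current position'': the inner $\fut_{(0,2)}$ is evaluated from the intermediate event, not from the original one. Under your intended unit spacing, the event at time $t{+}1$ lies in $(t,t{+}2)$, and from $t{+}1$ the event at $t{+}2$ lies in $(t{+}1,t{+}3)$; hence the conjunct is false at every non-terminal position and the intended model is rejected. More generally, this formula together with $\fut_{(0,2)}(XXS)$ admits essentially no word with three or more events.

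The repair is exactly what the paper does: use $\neg\fut_{(0,1)}(XXS) \wedge \fut_{(0,1]}(XXS)$ at every position, forcing successive events at distance exactly~$1$. The half-open length-$1$ interval $(0,1]$ is non-singular, hence legal in $\mitlf$. Once exact unit spacing is pinned down, the intervals $(0,1]$ and $(2^n,2^n{+}1]$ each isolate a unique later event and your horizontal/vertical clauses work as intended. The paper also explicitly fixes the first separator at time $2^n$ via $\fut_{(2^n-1,2^n]}s$ and forbids intermediate separators via $\neg\fut_{(0,2^n]}s$; you leave these implicit in ``chaining the separator recurrence with the integer step structure'', but they need to be written out for the converse direction of the reduction to go through.
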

\begin{proof}
We represent a tiling $T$ by a timed word $\rho_T$ where the time stamps of the letters are exactly $0,1,2, \ldots, 2^n\times(m+1)-1$. Consider the $\mitlf$ formula $\Phi_{EXPSPACE}$ as conjunction $\phi_1 \land \phi_f \land \phi_t \land \phi_s \land \phi_H\land \phi_V$ of formulae given below. 

\medskip

\noindent $\phi_1 := G [XXS \Rightarrow~ ((\neg\fut_{(0,1)}XXS) ~ \land \fut_{(0,1]} XXS ~\lor Atlast))]$\\
$\phi_s := \fut_{(2^n-1,2^n]} s ~\land ~ G [s ~\Rightarrow \{\neg(\fut_{(0,2^n]}s) \land (\fut_{(2^n,2^n+1]}s\lor ~Atlast)\}]$\\
$\phi_f := f$\\
$\phi_t := \fut [t\land\fut_{(0,1]}(s\land~ Atlast)]$\\ 
$\phi_H := G [\bigwedge\limits_{a\in X}\{a\Rightarrow~ \fut_{(0,1]}(s \lor \bigvee\limits_{(a,b) \in M_H}~b]$\\
$\phi_V := G [\bigwedge\limits_{a\in X}\{a\Rightarrow~ (( \fut_{(0,2^n+1]} Atlast) ~\lor~ (\fut_{(2^n,2^n+1]}(\bigvee\limits_{(a,b)\in M_V} ~b))]$
\medskip

Conjunct $\phi_1$ ensures that letters occur exactly at integer time points. Formula $\phi_s$ indicates that the first separator $s$ occurs at time-point $2^n$ and subsequently $s$ repeats exactly after a time distance of $2^n+1$. $\phi_H$ and $\phi_V$ respectively encode horizontal and vertical matching rules.
 Note that a letter and its vertically above letter occur at time distance $2^n+1$ and this is used for enforcing vertical compatibility.
It is clear from the formula construction that  $\Phi_{EXPSPACE}$
is satisfiable iff the original tiling problem has a solution. The size of $\Phi_{EXPSPACE}$ is linear in $n$ since we use binary encoding of time constants. Hence, we conclude that satisfiability of $\mitlf$ is \expspace-hard.
\end{proof}

\paragraph{\nexptime-hard tiling problem} Given a problem instance consisting of a tiling system
$(X,M_H,M_V)$, a natural number $n$ and a sequence $t=t_1,\ldots,t_n$ of leftmost $n$ tiles in bottom row, a solution to the problem is a tiling $T$ of a square of size $2^n\times 2^n$ such that $T(1,j)=t_j$ for $1 \leq j \leq n$. This tiling problem is known to be \nexptime-hard in $n$.

\begin{theorem} 
\label{theo:nexptimehard} Satisfiability of $\mitlfb$ (hence
$\mitlfpb$)  is \nexptime-hard.
\end{theorem}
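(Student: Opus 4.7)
The plan is to mimic the \expspace-hardness proof (Theorem~\ref{theo:expspacehard}), but for the \nexptime tiling problem, while scrupulously replacing every unbounded interval by a bounded one of size polynomial in $n$. I would represent a tiling $T:\{1,\ldots,2^n\}\times\{1,\ldots,2^n\}\to X$ as a strictly monotonic timed word whose events sit at the integer time stamps $0,1,\ldots,2^{2n}+2^n-1$: row $i$ consists of its $2^n$ tile letters at the time stamps $(i-1)(2^n+1),\ldots,(i-1)(2^n+1)+2^n-1$, immediately followed by a separator $s$ at time $i(2^n+1)-1$. Thus the final separator sits at the time stamp $2^{2n}+2^n-1$, and a tile at $(i,j)$ lies at time $(i-1)(2^n+1)+(j-1)$, so that vertically adjacent tiles are exactly $2^n+1$ time units apart (as in the \expspace reduction) while the overall horizon is bounded.

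Let $N \defn 2^{2n}+2^n$; its binary representation has $O(n)$ bits, so using it as an interval endpoint keeps the formula polynomial-sized. I would redefine the auxiliaries with bounded intervals: $Atlast \defn \neg \fut_{(0,N]} XXS$ and $G^N\psi \defn \psi \land \neg \fut_{(0,N]} \neg\psi$. The formula would then be
\[
\Phi_{\text{NEXP}} \;\defn\; \phi_1 \land \phi_s \land \phi_t \land \phi_{\text{end}} \land \phi_H \land \phi_V,
\]
where $\phi_1,\phi_s,\phi_H,\phi_V$ are verbatim the conjuncts of Theorem~\ref{theo:expspacehard} with every outer $G$ replaced by $G^N$ and with the new bounded $Atlast$ substituted in. The boundary row is pinned by $\phi_t \defn t_1 \land \bigwedge_{j=2}^{n} \fut_{(j-2,j-1]} t_j$, and to force exactly $2^n$ rows I would add $\phi_{\text{end}} \defn \fut_{(N-2,\,N-1]}(s \land Atlast)$, which together with $\phi_s$ and $\phi_1$ fixes the position of the $2^n$-th separator as the terminating event of the grid. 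All intervals appearing are non-punctual, so $\Phi_{\text{NEXP}}$ lies in \mitlfb\/ (hence in \mitlfpb), and since the formula has $O(n)$ conjuncts and all its constants have binary length $O(n)$, its total size is polynomial in $n$.

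Correctness is routine in both directions: a valid tiling $T$ yields a timed word satisfying $\Phi_{\text{NEXP}}$ by direct inspection; conversely, from any model $\rho$ one reads off $T(i,j)$ as the tile letter at the event with time stamp $(i-1)(2^n+1)+(j-1)$, with $\phi_H$ and $\phi_V$ guaranteeing the matching relations (exactly as in the \expspace proof) and $\phi_t$ enforcing the prescribed boundary letters. The main subtlety I expect, and the one I would need to treat carefully, is that $Atlast$ is now only a \emph{bounded} approximation of ``truly the last event'': in principle a model could sprout spurious ``junk'' events at times beyond $\tau + N$, where $\tau$ is the time stamp of the $2^n$-th separator, and still satisfy the bounded $Atlast$ there. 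I would dispatch this either by observing that such junk events sit outside the horizon of every $G^N$ in the formula and therefore cannot corrupt the tile information extracted from the first $2^n(2^n+1)$ events, or, more cleanly, by widening the constant in $Atlast$ (say to $2N$) so that a continuous chain of unit-spaced events forced by $\phi_1$ would be required to be actually terminated at time $2^{2n}+2^n-1$. Either way, the reduction is polynomial and the \nexptime-hardness of \mitlfb\/ (and hence of \mitlfpb) follows.
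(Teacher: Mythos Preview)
Your proposal is correct and follows essentially the same route as the paper: reduce from the $2^n\times 2^n$ tiling problem, place events at consecutive integer time stamps within a horizon $l=2^n(2^n+1)$, and replace every unbounded $G/\fut$ by a bounded one with that horizon. The one technical difference is in how the end of the grid is handled: the paper dispenses with $Atlast$ altogether by using $G_{I_{nonlast}}$ (with $I_{nonlast}=[0,\,l-1-(2^n{+}1))$) for the separator-periodicity and vertical-compatibility conjuncts and simply declares any events beyond time $l-1$ to be arbitrary and ignored --- which is exactly your first proposed fix for the ``junk'' issue, baked into the formula rather than argued after the fact.
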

\begin{proof}
The encoding of a tiling in timed word is exactly same as in Theorem \ref{theo:expspacehard}. Thus, letters occur at successive integer times and
the first $l=2^n\times (2^n+1)$ letters encode
the tiling. Remaining letters (if any) are arbitrary and ignored. 
The timestamp of $s$ ending the last row of tiling is $l-1$. All the letters denoting the last row occur in the closed interval $I_{last} = [l-1-(2^n+1),l-1]$ and letters denoting non-last row occur in the half open time interval $I_{nonlast} =[0,l-1-(2^n+1))$. 

The formula $\Phi_{NEXPTIME}$ is  similar to formula $\Phi_{EXPSPACE}$ but
all unbounded modalities $F \psi$ and $G\psi$ are replaced by bounded modalities $F_{[0,l-1]} \psi$ and $G_{[0,l-1]}$ and $Atlast$ is omitted. Instead we use time interval $I_{nonlast}$ so that $G_{nonlast}=G_{I_{nonlast}}$ and so on.
\medskip

\noindent
$\phi_1 := G_{[0,l-2]} [XXS \Rightarrow~ (\neg\fut_{[0,1)}(XXS) ~ \land \fut_{[0,1]}(XXS))]$\\
$\phi_s := \fut_{[0,2^n)} \neg s ~\land \fut_{[0,2^n]} s ~\land~ 
    G_{nonlast} [ s ~\Rightarrow \{\neg(\fut_{[0,2^n]}s) \land      (\fut_{[0,2^n+1]}s)\}]$\\
$\phi_t := t_1 \land ~ \fut_{[0,1]} (t_2 \land \fut_{[0,1]}( ...\fut_{[0,1]}(t_n)))$\\
$\phi_H := G_{[0,l-2]}[\bigwedge\limits_{a\in X}(a\Rightarrow~ \fut_{[0,1]} ~(s \lor \bigvee\limits_{(a,b) \in M_H} b))]$\\
$\phi_V := G_{nonlast}[\bigwedge\limits_{a\in X}(a\Rightarrow~ \fut_{(2^n,2^n+1]}(\bigvee\limits_{(a,b) \in M_V} b))]$
\medskip

Conjunct $\phi_1$ (together with $\phi_t$) ensures that letters in interval $[0,l-1]$ occur only at integer time points. $\phi_t$ ensures that the first $n$ tiles match $t$. Remaining conjuncts are similar to those in Theorem \ref{theo:expspacehard}.

It is easy to see that $\Phi_{NEXPTIME}$ is satisfiable iff the original tiling problem 
has a solution. The size of $\Phi_{NEXPTIME}$ is linear in $n$ since constant $l$ 
can be coded in binary in size linear in $n$. Hence, we  conclude that satisfiability 
of $\mitlfb$ is \nexptime-hard. 
\end{proof}

\paragraph{\pspace-hard tiling problem} (Corridor Tiling) A problem instance of the Corridor Tiling problem consists of a tiling system $(X,M_H,M_V)$ and a natural number $n$, subsets $W_l, W_r \subseteq X$ of tiles which can occur on left and right boundaries of the tiling region, and sequences \textit{top}=$t_1t_2 \ldots t_n$ and \textit{bottom}=$b_1b_2 \ldots b_n$ of tiles of length $n$ each. A solution to this problem is a tiling $T$ of a rectangle of size $n\times m$, for some natural number $m>0$, such that the bottom row is \textit{bottom}, and the top row is \textit{top}. Moreover only tiles from $W_l$ and $W_r$ can occur at the start and end of a row respectively. This problem is known to be \pspace-hard in $n$.

\begin{theorem} \label{theo:pspacehard} 
Satisfiability of $\mitlfz$ (and hence also $\mitlfzinf$ and $\mitlfpzinf$) is \pspace-hard.
\end{theorem}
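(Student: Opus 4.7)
The plan is a polynomial-time reduction from corridor tiling to satisfiability of \mitlfz, following the template of the \expspace- and \nexptime-hard encodings above but using only upper-bound future modalities. A tiling $T$ of width $n$ and height $m$ will be encoded as a strictly monotonic timed word over $\Sigma = X \cup \{s\}$ with letters at consecutive integer timestamps: the tile $T(j+1, i+1)$ sits at time $i(n+1)+j$ for $j \in \{0,\ldots,n-1\}$ and the row-ending separator $s$ sits at time $i(n+1)+n$. The first row will be forced to be $\mathit{bottom}$ and the last row to be $\mathit{top}$, with $W_l$ and $W_r$ policed at positions adjacent to the $s$'s.

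Since \mitlfz\/ admits $[0,\infty)$ as a legal upper-bound interval, both $G \varphi := \neg \fut_{[0,\infty)}(\neg \varphi)$ and $\mathit{Atlast} := \neg \fut_{[0,\infty)}(XXS)$ remain expressible, so the real loss relative to the \expspace proof is the ban on lower-bound intervals such as $(l,u]$. Two upper-bound-only gadgets compensate: the \emph{exact successor} gadget $\fut_{[0,1]}(\varphi) \land \neg \fut_{[0,1)}(\varphi)$, which on the integer grid selects the letter at distance exactly $1$, and the \emph{``at distance exactly $n+1$''} gadget $\mathit{At}_{n+1}(\varphi) := \fut_{[0,n+1]}(\varphi) \land \neg \fut_{[0,n]}(\varphi)$, which selects the letter one full row ahead. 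With these, vertical matching --- the only place where the earlier proofs genuinely used a non-upper-bound interval --- can be expressed purely within \mitlfz; this is the step I expect to be the main obstacle, and it is essentially resolved by $\mathit{At}_{n+1}$ together with the forced integer spacing.

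The formula $\Phi_{PSPACE}$ will then be a conjunction of: (i) a grid clause $G[XXS \Rightarrow ((\fut_{[0,1]}(XXS) \land \neg \fut_{[0,1)}(XXS)) \lor \mathit{Atlast})]$ forcing consecutive letters at exact distance $1$; (ii) a separator-pattern clause placing $s$ precisely every $n+1$ integer units, written with the same exact-distance trick on constants $n$ and $n+1$; (iii) the bottom-row clause $b_1 \land \fut_{[0,1]}(b_2 \land \fut_{[0,1]}(\ldots b_n))$ at the initial position; (iv) a top-row clause $\fut_{[0,\infty)}(t_1 \land \fut_{[0,1]}(\ldots \land \fut_{[0,1]}(t_n \land \fut_{[0,1]}(s \land \mathit{Atlast}))))$; (v) horizontal matching via $\fut_{[0,1]}$ with $s$ as the end-of-row exception; (vi) vertical matching $G[\bigwedge_{a\in X}(a \Rightarrow (\neg \mathit{At}_{n+1}(XX) \lor \mathit{At}_{n+1}(\bigvee_{(a,b)\in M_V} b)))]$, where $\neg \mathit{At}_{n+1}(XX)$ guards top-row tiles; and (vii) left- and right-boundary clauses keyed to the $s$ markers. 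Correctness will then be direct: any valid tiling $T$ yields a satisfying word $\rho_T$ by construction, and conversely any model of $\Phi_{PSPACE}$, read off in order of its (forced integer) timestamps between $\mathit{Atlast}$-delimited bounds, yields a valid tiling. Since the formula has size polynomial in $n$ and $|X|$ with all finite constants $\le n+1$, the reduction is polynomial-time, so satisfiability of \mitlfz\/ is \pspace-hard, and the syntactic inclusions $\mitlfz \subseteq \mitlfzinf \subseteq \mitlfpzinf$ transfer the lower bound to the two stated larger fragments.
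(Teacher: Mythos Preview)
Your overall reduction from corridor tiling is the same route the paper takes, and most of your clauses are sound. The genuine gap is your gadget $\mathit{At}_{n+1}(\varphi) := \fut_{[0,n+1]}(\varphi) \land \neg \fut_{[0,n]}(\varphi)$. On the integer grid this does \emph{not} say ``the letter at distance $n{+}1$ satisfies $\varphi$''; it says ``the \emph{first} future $\varphi$-position is at distance exactly $n{+}1$'', i.e.\ it additionally asserts that no position at distances $1,\ldots,n$ satisfies $\varphi$. That extra negative content is harmless in clause~(ii) (for separators you \emph{want} no intermediate $s$), but it breaks clause~(vi). From a tile position $i$ with $n\geq 2$, some of the positions $i{+}1,\ldots,i{+}n$ are tiles, so $\fut_{[0,n]}(XX)$ holds, hence $\mathit{At}_{n+1}(XX)$ is false and the guard $\neg\mathit{At}_{n+1}(XX)$ is trivially true: your vertical-matching implication is vacuous almost everywhere. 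Likewise $\mathit{At}_{n+1}(\bigvee_{(a,b)\in M_V} b)$ fails whenever a compatible tile happens to occur in the same or adjacent columns of the current row, even if the tile directly above is correct.

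The fix is immediate and already implicit in your clauses (iii)--(iv): on your forced integer grid, $\fut_{[0,1]}$ is an exact ``next'' operator, so $n{+}1$ nested $\fut_{[0,1]}$'s pick out precisely the position one row up, with size still $O(n)$. This is in fact how the paper handles vertical compatibility, except that the paper avoids the exact-spacing machinery altogether: it places consecutive letters at time distance in the \emph{open} interval $(1,2)$ (enforced by $\neg\fut_{[0,1]}XXS \land \fut_{[0,2)}XXS$), so that a single $\fut_{[0,2)}$ already means ``next position'' without any auxiliary negative conjunct, and vertical matching is an $n{+}1$-fold nesting of $\fut_{[0,2)}$. Your integer-grid encoding works once (vi) is repaired, but the paper's $(1,2)$-spacing is slightly cleaner because it never needs to simulate an ``exactly at distance $d$'' modality.
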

\begin{proof}
We represent a tiling $T$ by a timed word $\rho_T$ where the first letter is at time $0$ and time distance between successive letters is within the open interval $(1,2)$. Consider the $\mitlfz$ formula $\Phi_{PSPACE}$ as conjunction of formulae given below. Note that over strictly monotonic time words $F_{[o,u\rangle} \phi$ is equivalent to $F_{(o,u\rangle} \phi$.
 \begin{itemize}
 \item $\phi_1 := G[XXS \implies ((\neg\fut_{[0,1]}XXS) \land (Atlast \lor \fut_{[0,2)}XXS)]$ ensures that successive events occur at time distance $(1,2)$.
 \item A row is of length $n$ \\
$\phi_n := G[s\implies \{Atlast ~ \lor 
(\fut_{[0,2)} (XX \land \fut_{[0,2)} (XX \land  (\ldots n times \ldots \land \fut_{[0,2)}(s))\dots))]$
\item
 Horizontal Compatibility: $\phi_H := G[\bigwedge\limits_{a\in X} (a \implies
  \fut_{[0,2)}\{\bigvee\limits_{(a,b) \in M_H} b ~\lor s\})]$
 \item
 Vertical Compatibility:  Formula $a \land \fut (s \land \fut s)$  denotes a tile $a$ in  row other than the last row. Hence \\
  $\phi_V := \bigwedge\limits_{a\in X}~~G [
 \{a\land \fut(s\land\fut s)\}   \implies (\fut_{[0,2)} (XXS \land \fut_{[0,2)} XXS \land ( \ldots n times... \land\fut_{[0,2)}(\bigvee\limits_{(a,b)\in M_V} b)] $
 \item Matching the bottom row:\\
   $\phi_B := t_1 \land (\fut_{[0,2)} t_2 \land \fut_{[0,2)} t_3 \land \ldots  \fut_{[0,2)} (t_n \land \fut_{[0,2)} s)) \ldots ))$.
 \item Matching the top segment:\\
   $\phi_T :=\fut[ s \land \fut_{[0,2)} (
    b_1 \land (\fut_{[0,2)} b_2 \land \fut_{[0,2)} b_3 \land \ldots  
    \fut_{[0,2)} (b_n \land \fut_{[0,2)} s \land Atlast)) \ldots )]$.
 \item Matching white on the left side of the tiling:\\
  $\phi_L :=G[s\implies (Atlast \lor \fut_{[0,2)}(\bigvee\limits_{a\in W_l} a))]$
 \item Matching white on the right side of the tiling:\\
 $\phi_R:= \neg  F[\{\bigvee\limits_{a\not\in  W_r} a\} \land \fut_{[0,2)} s\}]$
\end{itemize}  
It is clear from the formula construction that  $\Phi_{PSPACE}$
is satisfiable iff the original tiling problem has a solution. The size of $\Phi_{PSPACE}$ is linear in $n$. Hence, satisfiability of $\mitlfz$ is 
\pspace-hard. 
\end{proof}

\section{Expressiveness of \mitlfp\/ Fragments}
\label{app:express}
The relative expressiveness of the fragments of Unary \mitlfp\/ is as depicted in Figure \ref{fig:unaryexpress}. The figure also indicates the languages considered to separate the logics expressively.

\begin{theorem}
\label{thm:express1}
\mitlfpb\/ $\subsetneq$ \mitlfpinf.
\end{theorem}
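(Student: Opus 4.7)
The plan is to establish the inclusion and the strictness separately. For the inclusion $\mitlfpb \subseteq \mitlfpinf$, I will simply chain two translations already proved in this paper. By the \mitlfpb-to-\potdta\/ construction in Section 4 (Theorem on embedding \mitlfpb\/ into \potdta), every \mitlfpb\/ formula admits a language-equivalent \potdta. By the \potdta-to-\mitlfpinf\/ reduction given in the previous subsection, every \potdta\/ in turn admits a language-equivalent \mitlfpinf\/ formula. Composing these two translations gives the inclusion.

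For strictness, the witnessing property is $\mathcal L_3 = \fut_{(0,\infty)}[a \land \fut_{(2,\infty)}c]$ from Figure \ref{fig:unaryexpress}. It is syntactically an \mitlfpinf\/ formula, so the question is only to show $\mathcal L_3$ is not \mitlfpb-definable. My plan is a standard horizon argument: in any \mitlfpb\/ formula $\phi$, every modality carries a bounded interval, so there is a largest constant $K_\phi$ such that each immediate modal context can reach at most $K_\phi$ time units from the current point. Iterating over modal depth $d_\phi$, the whole formula is ``local'' in the sense that truth at a point is determined by events lying in a time window of size $O(K_\phi\cdot d_\phi)$ around that point.

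To formalise this, I will invoke the EF-game characterisation for \mtl-fragments from \cite{PS11}, instantiated to \mitlfpb\/ with parameters $(K, d) = (K_\phi, d_\phi)$. For each such pair I will exhibit two timed words,
\[
\rho_N^{+} \;=\; (a,0)(a,1)(c,N+2), \qquad \rho^{-} \;=\; (a,0)(a,1),
\]
with $N$ chosen large enough that every point of $\rho_N^{+}$ lies outside the $(K_\phi\cdot d_\phi)$-neighbourhood of the extra $c$-event; concretely $N > K_\phi\cdot d_\phi + 2$ suffices. Then $\rho_N^{+} \in \mathcal L_3$ (the $a$ at position~$2$ witnesses $\fut_{(0,\infty)}$, and $c$ at time $N+2$ lies strictly more than $2$ time units later) whereas $\rho^{-}\notin \mathcal L_3$ because it contains no $c$ at all. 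I will then argue that Duplicator has a winning strategy in the $d_\phi$-round EF-game between $\rho_N^{+}$ and $\rho^{-}$ with bound $K_\phi$ on guard constants: whenever Spoiler picks the isolated $c$-position in $\rho_N^{+}$, every bounded-interval modality available to Spoiler is vacuous at the current ``anchor'' of the game (there is no matching position within distance $K_\phi$), so Duplicator may respond with any suitable position in $\rho^{-}$ without being refuted.

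The main technical obstacle is the EF-game bookkeeping: I have to confirm that the partial bijection maintained by Duplicator respects all bounded future and past modalities over $d_\phi$ rounds when the ``extra'' $c$-position is far beyond the reachable horizon. Once this is established, the EF-theorem of \cite{PS11} gives $\rho_N^{+} \equiv_\phi \rho^{-}$, contradicting the assumption that $\phi$ defines $\mathcal L_3$. Hence no $\mitlfpb$ formula defines $\mathcal L_3$, completing the strict inclusion $\mitlfpb \subsetneq \mitlfpinf$.
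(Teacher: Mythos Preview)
Your inclusion argument is exactly the paper's: chain the \mitlfpb-to-\potdta\/ embedding with the \potdta-to-\mitlfpinf\/ translation.

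For strictness, the paper uses the same separating formula $\mathcal L_3$ and the same EF-game framework from \cite{PS11}, but a different pair of words. There, both $\aaa_{n,k}$ and $\bbb_{n,k}$ have untimed word $a^{m+1}c$ with $m=nk$, the $a$'s at integer times $0,\ldots,m$, and the single $c$ placed beyond $m+2$ in $\aaa_{n,k}$ versus before $m+2$ in $\bbb_{n,k}$; since both words have identical length and identical letters at every position, Duplicator's ``copy-cat'' strategy is literal. Your pair $(\rho_N^+,\rho^-)$ is simpler---only two $a$'s, and the $c$ is dropped entirely in the second word---and it works for the same horizon reason. One phrasing point deserves tightening: the argument is not that Duplicator can somehow respond when Spoiler picks the $c$-position (Duplicator would lose instantly, as $\rho^-$ has no $c$), but that Spoiler can \emph{never} move to the $c$, because from either anchor (time $0$ or $1$) no bounded modality with upper constant $\le K_\phi$ reaches time $N+2$. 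Once that is said, the game is confined to the two $a$-positions, which are identical in both words, and Duplicator copies. Incidentally, your bound $N>K_\phi\cdot d_\phi+2$ is stronger than needed: since there are no intermediate stepping stones between time $1$ and time $N+2$, already $N>K_\phi$ suffices; the paper's many $a$'s are what force the product $nk$ there.
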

\begin{proof}
(i) \mitlfpb\/ $\subseteq$ \mitlfpinf.\\
This is evident from the translation of \mitlfpb\/ formulas to equivalent \potdfa, and the equivalence between \potdfa, \ttl\/ and \mitlfpinf.\\
(ii) \mitlfpinf $\nsubseteq$ \mitlfpb\\
Consider the \mitlfpinf\/ formula $\phi = \fut_{(0,\infty]}(a \land \fut_{(2,\infty)}c)$. We can show that there is no equivalent \mitlfpb\/ formula to $\phi$.
We shall prove this using a \mitlfpb\/ EF game \cite{PS11}, with $n$ rounds and $\maxint = k$, for any $k>0$. In such a game, the \ssp\/ is allowed to choose only non-singular intervals of the form $\langle l,u\rangle$, such that $u\leq k$. Let $m=nk$. $\aaa_{n,k}$ and $\bbb_{n,k}$ two families of words such that $untime(\aaa_{n,k})=untime(\bbb_{n,k})= a^{m+1}c$ and each $a$ occurs at integer timestamps from $0$ to $m$, and $c$ occurs beyond $m+2$ in $\aaa_{n,k}$ and before $m+2$ in $\bbb_{n,k}$. Then, $\forall n,k ~.~ \aaa_{n,k} \models \phi$ and $\bbb_{n,k}\not\models\phi$. Since, the \ssp\/ will be unable to place its pebble on the last $c$, the \ddp\/ has a copy-cat winning strategy for an $n$-round \mitlfpb\/ EF game with $\maxint = k$, over the two words. 
\end{proof}

\begin{theorem}
\mitlfpzinf $\subsetneq$ \mitlfp
\end{theorem}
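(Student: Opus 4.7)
The inclusion $\mitlfpzinf \subseteq \mitlfp$ is immediate by syntactic embedding: every interval of the form $[0,u\rangle$ or $\langle l,\infty)$ is a legal non-punctual integer-endpoint $\mitlfp$ interval. Thus the task reduces to exhibiting a property expressible in $\mitlfp$ but not in $\mitlfpzinf$. The plan is to take the $\mitlfp$-formula $\mathcal L_1 = \fut_{(0,\infty)}[a \land \fut_{(1,2)}c]$ nominated in Figure~\ref{fig:unaryexpress}; since its inner interval $(1,2)$ is bounded and two-sided, $\mathcal L_1$ lies outside the syntax of $\mitlfpzinf$, and I have to prove that no $\mitlfpzinf$ formula is equivalent to it.

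The natural tool is an Ehrenfeucht-Fra\"iss\'e game adapted to $\mitlfpzinf$, analogous to the one used in Theorem~\ref{thm:express1} and specialising the general $\mtl$ EF-theorem of \cite{PS11}. In this restricted version Spoiler may play only modalities $\fut_I, \past_I$ whose interval $I$ is of the form $[0,u\rangle$ or $\langle l,\infty)$, with integer endpoints bounded by a parameter $k=\maxint$. For each pair $(n,k)$ I will construct two timed words $\aaa_{n,k}, \bbb_{n,k}$ over $\{a,c\}$ such that $\aaa_{n,k} \models \mathcal L_1$, $\bbb_{n,k} \not\models \mathcal L_1$, yet Duplicator has a winning strategy in the $n$-round $\mitlfpzinf$ EF-game on them. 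The skeleton places an $a$ at time $0$ and fills both words with the \emph{same} dense cloud of $c$-events clustered around every integer time point in $[0,k+2]$, with intra-cluster spacing smaller than $1/(nk)$; $\aaa_{n,k}$ additionally sprinkles a dense set of $c$-events inside the open interval $(1,2)$ (thereby witnessing $\mathcal L_1$ from the $a$ at $0$), whereas $\bbb_{n,k}$ leaves $(1,2)$ empty of $c$'s. Because any one-sided window $[0,u\rangle$ or $\langle l,\infty)$ of integer radius anchored at an outside-cluster pebble either wholly contains, or is wholly disjoint from, each other cluster, the ``first/last-$c$'' profile at every integer threshold $\leq k$ agrees in the two words when computed from outside-cluster pebbles.

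The main obstacle is to design Duplicator's strategy so as to preserve, round-by-round, the invariant ``corresponding pebbles in $\aaa_{n,k}$ and $\bbb_{n,k}$ have matching one-sided-distance profiles to every earlier pebble, modulo a precision that shrinks by a controlled factor each round''. The delicate case occurs when Spoiler jumps onto a $c$-pebble inside $(1,2)$ of $\aaa_{n,k}$, for which $\bbb_{n,k}$ has no literal counterpart: Duplicator must then respond with a $c$ in the adjacent cluster of $\bbb_{n,k}$ around time $1$ or $2$. The reason this is safe despite a seeming ``distance-$1$'' asymmetry is that the opposing cluster has been padded symmetrically on both sides of the integer, and the inside-$(1,2)$ cloud of $\aaa_{n,k}$ has been chosen dense enough that no fixed-depth nesting of one-sided probes can pick out a specific gap or witness; any Spoiler chain of $n$ one-sided modalities collapses intervals at a rate of at most $k$ per round, which by the choice of densities stays well above the game's working precision. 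Once this invariant is sustained for $n$ rounds, no $\mitlfpzinf$ formula of modal depth $\leq n$ and constants $\leq k$ can separate $\aaa_{n,k}$ from $\bbb_{n,k}$, while $\mathcal L_1$ does; letting $n, k$ vary freely establishes $\mathcal L_1 \notin \mitlfpzinf$ and hence $\mitlfpzinf \subsetneq \mitlfp$. \qed
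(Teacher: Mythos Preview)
There is a concrete bug in your model construction. You place a single $a$ at time $0$ (hence at position $1$) and only $c$-events thereafter, but $\mathcal L_1 = \fut_{(0,\infty)}[a \land \fut_{(1,2)} c]$ is evaluated at position $1$ under the paper's strict-future semantics: it demands a position $j>1$ carrying an $a$. With only one $a$ in the word, \emph{neither} $\aaa_{n,k}$ nor $\bbb_{n,k}$ satisfies $\mathcal L_1$, so the pair does not separate the logics. Even after patching this (e.g.\ by inserting a second $a$), your two words have different untimed words (one has extra $c$'s in $(1,2)$, the other does not), and your Duplicator strategy is only a heuristic: once Spoiler lands on an ``extra'' $c$ at time $\approx 1.5$ in $\aaa_{n,k}$ and Duplicator retreats to a cluster point near $1$ or $2$ in $\bbb_{n,k}$, the pebbles are $\approx 0.5$ apart, and you give no invariant controlling how this offset propagates under subsequent one-sided moves. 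The phrase ``collapses intervals at a rate of at most $k$ per round'' does not address small-constant Spoiler moves, which is exactly where the difficulty lies.

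The paper's argument is structured quite differently and supplies precisely such an invariant. It takes $untime(\aaa_{n,k}) = untime(\bbb_{n,k}) = a(ac)^{2m+1}$ with $m=(n{+}1)(k{+}1)$, pushes all non-initial events beyond time $k{+}1$ (so that no upper-bound interval with constant $\leq k$ anchored at the origin reaches any of them), and makes the two words differ only in the timestamp of the single $c$ in the $(m{+}2)$-nd segment. Duplicator plays copy-cat except at four explicitly listed configurations around that central segment; the maintained invariant is an integer ``segment difference'' that can grow by at most one per round and that Spoiler would need more than $n$ rounds to transport to an endpoint where it could be cashed in. That is the kind of round-by-round bookkeeping your sketch needs but does not provide.
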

\begin{proof}
Consider the \mitlfp\/ formula $\phi:= \fut_{(0,\infty)}[a\land \fut_{(1,2)}c]$. Note that this formula requires any $a$ in the word to be followed by a $c$ within $(1,2)$ time units from it. While either one of the bounds of the interval $(1,2)$ (either lower bound of 1 t.u. or upper bound of 2 t.u.) may be specified by a \mitlfpzinf\/ modality (such as $\fut_{(1,\infty)}$ or $\fut_{(0,2)}$ respectively), both these can not be asserted together when the $\fut_{(1,2)}$ modality is within the scope of an unbounded ($\fut_{(0,\infty)}$) modality. We shall prove this by showing that here is no \mitlfpzinf\/ formula for $\phi$, using an \mitlfpzinf\/ EF game (as described below).

Let $m=(n+1)(k+1)$. Consider two families of words $\aaa_{n,k}$ and $\bbb_{n,k}$ with $untime(\aaa_{n,k}) = untime(\bbb_{n,k}) = a (ac)^{2m+1}$. Both families of words have all events except the initial $a$, occurring beyond the timestamp $k+1$. Hence, all the letters are at a time distance in $(k,\infty)$ from the origin. The intuition behind this is to disallow the \ssp\/ to distinguish integer boundaries between events. 
We shall call each $ac$-pair a \emph{segment}. The words are depicted in Figure \ref{fig:expressgame2}. Let $\delta$ be such that $0<\delta << 1/(2m+1)^2$. Consider the word $\aaa_{n,k}$ such that the segments occur beyond $k+1$ as follows. A segment begins with the occurrence of an $a$ at some time stamp (say $x$), and has a $c$ occurring at $x+1-\delta$. The following segment begins at $(x+1+2\delta)$. Since each $a$ has a $c$ from its own segment within time distance $(1,2)$ time units and a $c$ from the following segment at a time distance $>2$ time units and successive $c$'s are separated by a time distance $>1$ time units, it is easy to verify that $\forall n,k ~.~ \aaa_{n,k}\not\models\phi$.

The timed word $\bbb_{n,k}$ is identical to $\aaa_{n,k}$ except for the positioning of $c$ in the $(m+2)^{nd}$ segment, which is at a time distance $2-\epsilon$ from the $a$ of the middle $(m+1)^{st}$ segment, for some $\epsilon<<\delta$. Due to this $c$ which is at a time distance within $(1,2)$ time units from the middle $a$, we can conclude that $\forall n,k ~.~ \bbb_{n,k}\models \phi$.

Now consider an $n$ round \mitlfpzinf- EF game, with $\maxint =k$. We shall say that the game in a given round is in \emph{identical configurations} if the initial configuration of the round is of the form $(i,i)$. For $1\leq x\leq (2m+1)$, denote the $a$ and $c$ of the $x^{th}$ segment by $a_x$ and $c_x$ respectively. Consider the following strategy for the \ddp.
\begin{itemize}
\item Starting from identical configurations, the \ddp\/ may mimic the \ssp's moves at all times and maintain identical configurations except in the following cases, which lead to non-identical configurations (these are depicted by dotted arrows in Figure \ref{fig:expressgame2}).
\begin{itemize}
\item Starting from the identical configuration $(a_{m+1},a_{m+1})$ in the middle segment, the \ssp\/ chooses the interval $(0,2)$ and chooses $c_{m+2}$ in $\bbb_{n,k}$ then the \ddp\/ must respond by choosing $c_{m+1}$ in $\aaa_{n,k}$. 
\item Starting from the identical configuration $(a_{m+1},a_{m+1})$ in the middle segment, the \ssp\/ chooses the interval $(2,\infty)$ and chooses $c_{m+2}$ in $\aaa_{n,k}$ then the \ddp\/ must respond by choosing $c_{m+3}$ in $\bbb_{n,k}$.
\item Starting from the identical configuration $(c_{m+1},c_{m+1})$, the \ssp\/ chooses the interval $(0,2)$ and chooses $a_{m+1}$ in $\bbb_{n,k}$ then the \ddp\/ must respond by choosing $a_{m+2}$ in $\aaa_{n,k}$.
\item Starting from the identical configuration $(c_{m+1},c_{m+1})$, the \ssp\/ chooses the interval $(2,\infty)$ and chooses $a_{m+1}$ in $\aaa_{n,k}$ then the \ddp\/ must respond by choosing $a_{m}$ in $\bbb_{n,k}$.
\end{itemize}
Note that the only position where the two words differ is in the timestamp of $c_{m+2}$ and the only position from which they can be differentiated by an integeral time distance is from $a_{m+1}$. Hence starting from identical configurations, the only way in which a non-identical configuration may be achieved is by one of the above possibilities and the resulting configuration has a segment difference of 1.
\item The segment difference of a non-identical configuration is $\geq ~1$. Starting from a non-identical configuration, we have the following possibilities
\begin{itemize}
\item If the initial configuration includes $a_{m+1}$ or $c_{m+2}$ in $\bbb_{n,k}$, then the \ssp\/ may choose a suitable interval and choose the position $c_{m+1}$ or ${a_m+1}$ respectively. This increases the segment difference of the configuration by 1. For example, if the initial configuration is $(c_{m+3},c_{m+2})$ (this may be achieved, for instance, when the \ssp\/ chooses the interval (0,1) for two successive ``future rounds'', starting from an initial non-identical configuration of $(a_{m+1},a_m)$), and the \ssp\/ chooses the interval (0,2) and chooses $a_{m+1}$ in $\bbb_{n,k}$ then the \ddp\/ is forced to choose $a_{m+2}$, resulting in a segment difference of 2.
\item For any other choice of positions by the \ssp\/, the \ddp\/ may ``copy'' the \ssp\/'s moves by moving exactly the number of positions that is moved by the \ssp. This will result in non-identical configuration with segment difference less than or equal to the initial segment difference in the round.
\end{itemize}
\end{itemize}
It is now easy to argue that the above strategy is a winning strategy for the \ddp\/ for an $n$-round \mitlfpzinf\/ game with $\maxint=k$. By observing the two words, we can see that the only way the \ssp\/ can win a round is by beginning with non-identical configurations at either end of the words (at time distance $\leq k$ from either the beginning or end of the sequence of segments), such that the \ssp\/ may have an $a$ or a $c$ to chooses in one of the words, while the \ddp\/ wouldn't. However, the first time a non-identical configuration is achieved, is in the middle segment (as discussed above). The \ssp\/ has two choices- (i) to increase the segment difference by repeatedly choosing the middle segment configurations, or (ii) to maintain the segment difference and move towards either end of the word. In order to maintain a segment difference $\geq 1$ and reach either end of the word, the \ssp\/ can move a maximum time distance of $k$ time units (if it chooses an interval larger than that, the \ddp\/ may be able to achieve identical configurations). Hence the \ssp\/ requires at least $n$ rounds to reach either end of the word whilst maintaining non-identical configurations. Since the game is only of $n$ rounds, the \ssp\/ will not have enough rounds to first establish a non-zero segment difference and maintain it, while traversing to either end of the words. Hence, the above strategy is a winning strategy for the \ddp\/ for an $n$-round \mitlfpzinf\/ EF-game with $\maxint=k$.
\end{proof}

\begin{figure}
\begin{tikzpicture}[scale=0.9, transform shape]
\draw (0,4) node{$\aaa_{n,k}$};
\draw[dotted] (0.2,4)--(1,4); \draw[dotted] (15.2,4)--(16,4);
\draw(1,4) node{x}-- (3.6,4) node{l}-- (4,4) node{x}--(4.4,4) node{l}-- (4.8,4) node{o}-- (7.4,4) node{l} --(7.8,4) node{o}--(8.2,4) node{l}-- (8.6,4) node{I}-- (10.8,4) node{o}-- (11.2,4) node{l}-- (11.6,4) node{I}--(12,4) node{l}-- (12.4,4) node{T}-- (14.6,4) node{I}-- (15,4) node{l}-- (15.4,4) node{T} -- (15.6,4) node{};
\draw(1,4.3) node{$a$}; \draw (3.6,4.3) node{$c$}; \draw (4.8,4.3) node{$a$}; \draw  (7.4,4.3) node{$c$}; \draw (8.6,4.3) node{$a$}; \draw (11.2,4.3) node{$c$}; \draw (12.4,4.3) node{$a$}; \draw (15,4.3) node{$c$};

\draw (0,2) node{$\bbb_{n,k}$};
\draw[dotted] (0.2,2)--(1,2); \draw[dotted] (15.2,2)--(16,2);
\draw(1,2) node{x}-- (3.6,2) node{l}-- (4,2) node{x}--(4.4,2) node{l}-- (4.8,2) node{o}-- (7.4,2) node{l} --(7.8,2) node{o}--(8.2,2) node{l}-- (8.6,2) node{I}-- (10.8,2) node{o}-- (11.2,2) node{l}-- (11.6,2) node{I}--(12,2) node{l}-- (12.4,2) node{T}-- (14.6,2) node{I}-- (15,2) node{l}-- (15.4,2) node{T} -- (15.6,2) node{};
\draw(1,2.3) node{$a$}; \draw (3.6,2.3) node{$c$}; \draw (4.8,2.3) node{$a$}; \draw  (7.4,2.3) node{$c$}; \draw (8.6,2.3) node{$a$}; \draw (10.7,2.3) node{$c$}; \draw (12.4,2.3) node{$a$}; \draw (15,2.3) node{$c$};

\draw(4.8,3.6) node {$a_{m+1}$};
\draw(4.8,1.6) node {$a_{m+1}$};
\draw(8.6,3.6) node {$a_{m+2}$};
\draw(8.6,1.6) node {$a_{m+2}$};
\draw(11.2,3.6) node {$c_{m+2}$};
\draw(10.7,1.6) node {$c_{m+2}$};
\draw(7.4,3.6) node {$c_{m+1}$};
\draw(7.4,1.6) node {$c_{m+1}$};
\draw (1,1.6) node{$a_m$};
\draw (15,1.6) node{$c_{m+3}$};

\draw [arrows= -triangle 45,dotted] (10.7,2.5)--(7.4,3.3);
\draw [arrows= -triangle 45,dotted] (11.2,3.3)--(15,2.5);
\draw [arrows= -triangle 45,dotted] (4.8,2.5)--(8.6,3.3);
\draw [arrows= -triangle 45,dotted] (4.8,3.3)--(1,2.5);
\end{tikzpicture}
\caption{\mitlfpzinf\/ EF game to show that \mitlfpzinf $\subsetneq$ \mitlfp}
\label{fig:expressgame2}
\end{figure}

\begin{theorem}
\label{thm:fpexpress}
\mitlfpinf\/ $\subsetneq$ \mitlfpzinf
\end{theorem}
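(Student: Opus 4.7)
I would prove this in two parts. The inclusion $\mitlfpinf \subseteq \mitlfpzinf$ is immediate from syntactic containment, since every interval of shape $\langle l,\infty)$ permitted in $\mitlfpinf$ is also permitted in $\mitlfpzinf$. For strict inclusion, I will use the separating witness indicated in Figure~\ref{fig:unaryexpress}, namely the $\mitlfpzinf$ formula $\phi := \fut_{(0,\infty)}(a \land \fut_{[0,2]} c)$, which combines an outer lower-bound unbounded modality with an inner upper-bound bounded modality. My plan is to show $\phi$ has no equivalent in $\mitlfpinf$ via an EF-game argument in the style of Theorem~\ref{thm:express1} and the preceding theorem, based on the $\mtl$ EF-game framework of \cite{PS11} with Spoiler restricted to modalities $\fut_{\langle l,\infty)}$ and $\past_{\langle l,\infty)}$ with $l \le k$, played for $n$ rounds.

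For each $n,k \ge 1$ I will construct two strictly monotonic timed words $\aaa_{n,k}$ and $\bbb_{n,k}$ over $\{a,c\}$ with common untimed projection $(ac)^{2m+1}$ for some $m$ chosen much larger than $nk$. Each ``segment'' is an $a$ followed by a $c$, with segment boundaries placed at integer times widely separated (by much more than $k+2$). In $\aaa_{n,k}$ every $c$ sits strictly more than $2$ time units after its $a$, so that no position witnesses $a \land \fut_{[0,2]} c$ and hence $\aaa_{n,k} \not\models \phi$. In $\bbb_{n,k}$ the layout is identical except that the $c$ of the central segment is shifted to lie at some distance in $(0,2)$ from its $a$, producing exactly one witness and giving $\bbb_{n,k} \models \phi$.

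The Duplicator's strategy will be a segment-copycat one, maintaining the invariants that (i) both pebbles carry the same letter and (ii) the segment displacement between them is bounded by a quantity that decreases with the number of rounds remaining. Since each Spoiler move $\fut_{(l,\infty)}$ or $\past_{(l,\infty)}$ with $l \le k$ can perturb the displacement by only $O(k)$ segments, after $n$ rounds it stays $\ll m$ and Duplicator's pebble remains in the uniform region, where $\aaa_{n,k}$ and $\bbb_{n,k}$ are time-isomorphic. The main obstacle will be handling Spoiler plays that deliberately target the perturbed central segment, for instance using $\fut_{(0,\infty)}$ to jump onto the shifted $c$ in $\bbb_{n,k}$; I will then need to show that Duplicator's matching play (some uniform $c$ in $\aaa_{n,k}$) leaves no exploitable asymmetry. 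The crucial point is that lower-bound modalities cannot detect whether an adjacent $c$ lies within $[0,2]$ of a given $a$, because the set of positions reachable by a single Spoiler jump of distance $>l$ with $l \le k$ from a pebble outside the central region is order-isomorphic in both words. Iterating this observation through $n$ rounds yields the winning Duplicator strategy, which by the EF theorem for $\mtl$ of \cite{PS11} establishes inexpressibility of $\phi$ in $\mitlfpinf$.
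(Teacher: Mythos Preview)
Your approach via a direct EF game for $\mitlfpinf$ is plausible and can be made rigorous, but it differs substantially from the paper's argument. The paper does \emph{not} play an EF game for this theorem; instead it exploits the equivalence $\mitlfpinf \equiv \potdta$ already established and shows directly that no $\potdta$ recognises $\mathcal L(\phi)$ for $\phi = \fut_{[0,\infty)}(a \land \fut_{[0,2)} c)$. The argument is a pigeonhole one on automata: given a putative $\potdta$ $\autm$ with $m$ progress edges, take $\rho = (ac)^{4m+1}$ with each $c$ placed $2.5$ time units after its $a$, so $\rho \notin \mathcal L(\phi)$. The run of $\autm$ on $\rho$ resets clocks at no more than $m$ positions, so at least $m+1$ distinct $acacac$ blocks are entirely ``unclocked''. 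For each such block $j$ form $v_j$ by shifting its middle $c$ left so that $v_j \in \mathcal L(\phi)$. By determinism of $\autm$, any single progress edge can be newly enabled at the altered $c$ of at most one $v_j$; with only $m$ edges some $v_k$ is therefore indistinguishable from $\rho$ for $\autm$, a contradiction.

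Your EF-game route avoids invoking the automaton characterisation, which makes it more self-contained, but the Duplicator strategy is more delicate than your sketch suggests. One point to sharpen: the assertion that a Spoiler move ``can perturb the displacement by only $O(k)$ segments because $l \le k$'' is not the operative mechanism, since $\fut_{(0,\infty)}$ already lets Spoiler jump arbitrarily far. The real reason displacement stays bounded is that the two words differ at a single timestamp (the central $c$), so Duplicator can mirror Spoiler to the identical position index except when the move's time constraint separates the two timestamps of that one position; each such interaction shifts the displacement by at most one. You must also argue that Spoiler cannot cash in accumulated displacement at the word boundaries; here the key is that lower-bound modalities are permissive enough that Duplicator can follow Spoiler to an identical boundary position and reset the displacement to zero. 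The paper's automaton argument sidesteps all of this bookkeeping by working with the finite combinatorics of progress edges instead.
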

\begin{proof}
Logic \mitlfpinf\/ is a syntactic fragment of \mitlfpzinf. We shall now show that it is strictly less expressive than \mitlfpzinf\/ by showing that there is no \potdta\/ which accepts the language given by the formula $\phi:= \fut_{[0,\infty)}(a\land \fut_{[0,2)} c)$. The proof relies on the idea that since a \potdta\/ may be normalized to one that has a bounded number of clocks (bounded by the number of progress edges), and every edge may reset a clock at most once on a given run, the \potdta\/ cannot ``check'' every $a$  for its matching $c$ in a timed word which has sufficiently many $ac$ pairs.

Assuming to contrary, let $\autm$ be a \potdta\/ with $m$ number of progress edges , such that $\mathcal L(\phi)=\mathcal L(\autm)$. Now consider the word $\rho$ consisting of the event sequence $(ac)^{4m+1}$ where the $x^{th}$ $ac$ pair gives the timed subword $(a,3x)(c,3x+2.5)$. Thus, each $c$ is $2.5$ units away from the preceding $a$. Hence, $\rho\not\in\mathcal L(\phi)$. 
Consider the run of $\autm$ over $\rho$. 
There are a maximum number of $m$ clocks in $\autm$ that are reset, in the run over $\rho$.

By a counting argument (pigeon-hole principle), there are at least $m+1$ (possibly overlapping but distinct) subwords of $\rho$ of the form $acacac$, none of whose elements have been ``clocked'' by $\autm$. Call each such subword a group. Enumerate the groups sequentially.
Let $v_j$ be a word identical to $\rho$ except that the $j^{th}$ group is altered, such that its middle $c$ is shifted by 0.7 t.u. to the left, so that $v_j$ satisfies the property required in $\phi$. Note that there are at least $m+1$ such distinct $v_j$'s and for all $j$, $v_j\in\mathcal L(\phi)$. \\
Given a $v_j$, if there exists a progress edge $e$ of $\autm$ such that in the run of $\autm$ on $v_j$, $e$ is enabled on the altered $c$, then for all $k\neq j$, $e$ is not enabled on the altered $c$ of $v_k$. This is because, due to determinism, the altered $c$ in $v_j$ must satisfy a guard which neither of its two surrounding $c$'s in its group can satisfy. \\
From the above claim, we know that the $m$ clocks in $\autm$, may be clocked on at most $m$ of the altered words $v_j$. However, the family $\{v_j\}$ has at least $m+1$ members. Hence, there exists a $k$ such that the altered $c$ of $v_k$, (and the $k^{th}$ group) is not reachable 
by $\psi$ in $\rho$ or any of the $\{v_j\}$. Hence $w\models\psi$ iff $v_k\models \psi$. But this is
a contradiction as $\rho \notin L(\phi)$ and $v_k \in L(\phi)$ with $L(\phi)=L(\psi)$. \\
Therefore, there is no \potdta\/ which can express the language $\mathcal L(\phi)$.
\end{proof}

\section{Discussion}
We have shown how unaryness, coupled with timing restrictions, yields interesting fragments of \mitl, that are placed lower in the hierarchy, in terms of expressiveness and decidability complexities.
\begin{itemize}
\item We have a \emph{B\"uchi-like logic-automaton characterization} for a fragment of \mtl. \potdta\/ are closed under complementation, and have the key ingredients required, for convenient logic-automata conversions. Moreover, the exact characterization of \potdta, in terms of an \mtl\/ fragment (i.e. \mitlfpinf) has been identified.
\item The embedding of \mitlfpb\/ into \potdta, and the conversion from \potdta\/ to \mitlfpinf\/ formulas shows us how even properties like punctuality and boundedness, when coupled with determinism, may be expressed by a unary and non-punctual fragment, i.e. \mitlfpinf.
\item Unary fragments \mitlus\/ and \mitluszinf\/ do not result in a gain in satisfiability complexity and there is a significant gain (from \pspace-complete to \np-complete), when \mitlusinf\/ is compared to its unary-counterpart, \mitlfpinf.
\item While satisfiability of \mitlfz\/ is \pspace-complete, the satisfiability of \mitlfpinf\/ is \np-complete. This asymmetry in decision complexities of logics with one-sided constraints, and on finite words, has been observed.
\item An \nexptime-complete fragment of \mitl\/ has been recognized: \mitlfpb, which combines the restrictions of unaryness, punctuality, as well as boundedness is a rather restrictive logic in terms of expressiveness, but more succinct than \mitlfpinf.  In general, for the unary fragments of \mitl, one-sided interval constraints prove to yield much better decision complexities than bounded interval constraints.
\end{itemize}

In this chapter, our logic-automata reductions rely on strict monotonicity of the pointwise models. We believe that the results may be extended to the weakly-monotonic case, by using some concepts similar to the untimed case and still maintaining the decision complexities. 

\bibliography{mybib}

\end{document}